%
\let\accentvec\vec
\documentclass[]{lipics-v2021}

\let\vec\accentvec
\usepackage[T1]{fontenc}
%

\usepackage{multirow}
\usepackage{amsmath}
\usepackage{amssymb}
\usepackage{tabu}
\usepackage{booktabs}

\usepackage{thm-restate}
\usepackage{tablefootnote}

\pdfoutput=1
\hideLIPIcs
\nolinenumbers

\setcounter{secnumdepth}{2}

\ccsdesc[500]{Theory of computation~Computational geometry}

\author{Wouter {Meulemans}}{Department of Mathematics and Computer Science, TU Eindhoven, the Netherlands}{w.meulemans@tue.nl}{https://orcid.org/0000-0002-4978-3400}{Partially supported by the Dutch Research
 Council (NWO) under project number VI.Vidi.223.137.}
 \author{Arjen {Simons}}{Department of Mathematics and Computer Science, TU Eindhoven, the Netherlands}{a.simons1@tue.nl}{https://orcid.org/0009-0008-1271-180X}{Supported by the Dutch Research
 Council (NWO) under project number VI.Vidi.223.137.}
\author{Kevin {Verbeek}}{Department of Mathematics and Computer Science, TU Eindhoven, the Netherlands}{k.a.b.verbeek@tue.nl}{https://orcid.org/0000-0003-3052-4844}{}
 


\authorrunning{W. Meulemans et al.}

\newcommand{\Reals}{\mathbb{R}}
\newcommand{\Fam}{\mathcal{F}}
\newcommand{\spanop}{\mathop{\mathrm{span}}}
\newcommand{\anc}{\mathop{\mathrm{anc}}}


\makeatletter
\renewcommand\subsubsection{\@startsection{subsubsection}{3}{\z@}%
                       {-15\p@ \@plus -4\p@ \@minus -4\p@}%
                       {-0.5em \@plus -0.22em \@minus -0.1em}%
                       {\normalfont\normalsize\bfseries\boldmath}}
\makeatother

\usepackage{graphicx,epstopdf}
\usepackage{wrapfig}
%
%
  {\hfill\qed\medskip\par}  

\renewenvironment{proof}[1][Proof]
    {\noindent\textit{#1.  }}  
  {\hfill\qed\medskip\par}  

\begin{document}
\title{Visual Complexity of Point Set Mappings}

%
%

%
\maketitle              
\begin{abstract}
We study the visual complexity of animated transitions between point sets. Although there exist many metrics for point set similarity, these metrics are not adequate for this purpose, as they typically treat each point separately. Instead, we propose to look at translations of entire subsets/groups of points to measure the visual complexity of a transition between two point sets. Specifically, given two labeled point sets $A$ and $B$ in $\Reals^d$, the goal is to compute the cheapest transformation that maps all points in $A$ to their corresponding point in $B$, where the translation of a group of points counts as a single operation in terms of complexity. In this paper we identify several problem dimensions involving group translations that may be relevant to various applications, and study the algorithmic complexity of the resulting problems. Specifically, we consider different restrictions on the groups that can be translated, and different optimization functions. For most of the resulting problem variants we are able to provide polynomial time algorithms, or establish that they are NP-hard. For the remaining open problems we either provide an approximation algorithm or establish the NP-hardness of a restricted version of the problem. Furthermore, our problem classification can easily be extended with additional problem dimensions giving rise to new problem variants that can be studied in future work.

\keywords{Group Translations, Visual Complexity, Point Sets.}
\end{abstract}
\section{Introduction}
\label{sec:introduction}

Visualizations are useful for exploring, analyzing and communicating about complex data~\cite{munzner2014visualization}, leveraging human perceptual and cognitive functions. By visually representing data elements, one may identify patterns, trends and outliers.
Typically, each data element is depicted through a visual representation (symbol, glyph or other visual encoding of its properties). As one of the strongest visual variables~\cite{munzner2014visualization}, the position of this representation is often determined by specific properties, such as (geo)spatial location on a map or other attribute encodings.
Yet, this position may change, for example, as the data updates in time-varying contexts~\cite{archambaultDynamicMaps}, or when a user selects other attributes to map to the element's position. They may also shift to disambiguate visualizations by applying distortions to remove overlap~\cite{EffOverlapRemoval,smwg}. 
In such cases, it is important for users to understand the transition between states~\cite{archambaultDynamicMaps}. They have a mental map~\cite{Tversky-1993} of the old state and need to translate this to the new state. Animated transitions may aid this process \cite{AnimatedTransitions} and are often used to support visual analytics.

We are interested in measuring the \emph{visual complexity} of such transitions, as a proxy to automatically assess the cognitive load they may induce. For simplicity, we assume that each data element is positioned based on a point in Euclidean space (instead of by, e.g., a polygon). The question then, is to measure how complex a transition from one point set to another is.
Various approaches already exist to assess point set similarity, including simple forms such as summing (squared) distances and taking the maximum distance~\cite{smwg}, or more elaborate measures such as the Hausdorff distance~\cite{hausdorff} and Earth-mover's distance~\cite{DBLP:conf/giscience/DuckhamKPSTVW16,earthmovers}.

However, such methods are inadequate for assessing the visual complexity of a transition, as they treat each point ``separately'': how far does it have to move, given some constraints on the entire set. A simple example illustrates that this is not adequate for capturing transition complexity: imagine two point sets such that each point translates by the exact same vector, from one set to the other. Understanding this transition is fairly straightforward, as the user sees but a single, grouped motion, following the Gestalt principle of Common Fate~\cite{gestalt}, as shown in Figure~\ref{fig:CommonFateExample}.
\begin{figure}[h]
    \centering
\includegraphics[scale=0.8]{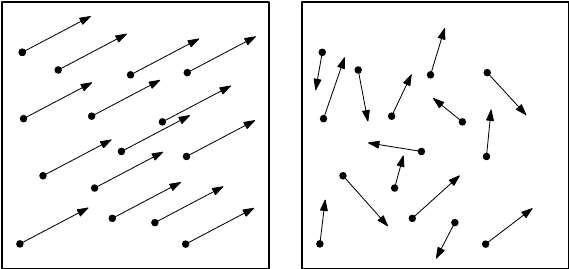}
    \caption{Lower visual complexity when points move in unison.}
    \label{fig:CommonFateExample}
\end{figure}

\subsubsection{Contributions.}
We introduce a problem classification for various problem variants that arise in the study of visual complexity of transitions between point sets, based on common motion of points (Section~\ref{sec:framework}). The problem variants arise from different constraints on the transitions and methods for measuring the complexity. In the subsequent chapters, we study the algorithmic complexity of cases that arise from the classification; refer to Table~\ref{tab:overview} for an overview of our results.

\subsubsection{Related work.}
To improve the understandability of animated transitions between point sets beyond simple linear interpolation, there are roughly two approaches. A first approach is to modify the trajectories along which the points move, to improve the visible structure in the transition by decreasing occlusion and improving grouped motion. Such approaches include trajectory bundling \cite{bundledTrajectories}, the introduction of waypoints \cite{stagedTransitionScatter} and vector-field based routing \cite{vectorFieldTransition}. The drawback is that such techniques tend to increase the complexity of object tracking, as nearby objects can be confused \cite{pointCrowding}. Compared to our work, such techniques take a looser approach to group perception via Common Fate, as they do not build on strict identical movement.

The second approach is to split the animation in several stages -- sequential steps. This can be according to a predefined scheme, for example in the case of waypoints \cite{stagedTransitionScatter}, or optimized explicitly \cite{mizuno2019optimizing}. Mizuno et al. \cite{mizuno2019optimizing} optimize the stages of changing point locations and set memberships, to minimize gaze shift. That is, the goal is to reduce the amount of movement needed for the eyes to track the moving objects through the stages. This is slightly different from our model, as it focuses more on the tracking aspect of transitions \cite{mizuno2019optimizing}, rather than complexity of the mental model that a transition requires.

As mentioned above, our problem relates to point-set similarity. Often, measures are sensitive to applying transformations (such as translations, scaling or rotations) to the objects. When such transformations do not affect a measure, this is called invariance.
In this context, our work is conceptually in between measures that are and measures that are not translation invariant: our models aim to capture common translations, while still accounting for such common translations as affecting similarity.

Aligning point sets via a transformation to optimize their similarity, also known as point-set registration, is a well-studied problem, see e.g. \cite{PointSetRegistrationCor,leastSquares,leastSquares-closed-form} or the survey by Alt and Guibas~\cite{alt2000discrete}. For example, optimizing the sum of squared distances under translations only equates to aligning centroids \cite{cohen1999earth} and is useful, for example, in computing grid maps \cite{eppstein2015improved}; under different measures such alignment may become more complex \cite{eppstein2015improved}. Point-set registrations has applications, for example, also in computer vision to estimate the motion parameters of rigid objects~\cite{leastSquares}. These techniques can be useful to apply a single transformation followed by individual movement, but cannot leverage common patterns that exist only in subsets of the points.
Other methods in this area (e.g. \cite{pointsetRegIterative2,pointsetRegIterative1,coherentPointDrift}) primarily focus on finding the best correspondence between point sets without a labeling, and as such different from the problem studied in this work.

Shape-interpolation techniques also aim to ensure coherent movement of points during transformations. Many of these techniques triangulate the input shapes and interpolate between the resulting triangulations~\cite{as-rigid-as-possible,IntersectFreePolygonMorph,CompatibelTriangulationMorph}. The goal is often to minimize local distortions, ensuring that nearby points move cohesively. 
However, the focus lies on the volume being distorted, rather than precise movements of individual points -- it aims for similar but not necessarily identical transformations.

\section{Problem Classification}
\label{sec:framework}
In this section we present a problem classification for transitions between point sets based on the common motion of points. The problem variants in the classification arise from different constraints and optimization criteria on the transitions.
We discuss potential further extensions/dimensions of the classification in Section~\ref{sec:conclusion}, which then directly introduce new open problems for future work. 

Our input consists of two collections of points $A = \{a_1,\hdots, a_n\}$ and $B = \{b_1,\hdots, b_n\}$ of equal size $n$ in $\Reals^d$. Our goal is to compute the cheapest transformation from points in $A$ to points in $B$, where the translation of any subset of points counts as a single operation in terms of complexity, instead of counting all the individual point translations. 
We refer to such a subset as a \emph{group}\footnote{Not to be confused with the standard notion of \emph{groups} in abstract algebra.}, and to the corresponding translation as a \emph{group translation}. 
A single point may be part of multiple groups. 
We assume that the points in $A$ and $B$ already have a given mapping such that $a_i$ should be translated to $b_i$ for all $i$. In the following, we use the notation $[n] = \{1, \hdots, n\}$. Formally, a solution to this problem consists of a pair $(\Fam, \tau)$, where $\Fam \subseteq \mathcal{P}([n])$ is a family of subsets of $[n]$ indicating the groups via the point indices, and $\tau\colon \Fam\rightarrow \Reals^d$ describes the translation vectors for the groups defined by $\Fam$. We also define the subfamilies $\Fam_i = \{S \in \Fam\mid i \in S\}$ as the set of groups that contain index $i$.
Since the points in $A$ and $B$ have a given mapping, we can represent the input by a single collection of vectors $\Delta = \{\delta_1, \hdots, \delta_n\}$, where $\delta_i = b_i - a_i$ is the difference in position between $a_i$ and $b_i$. Now, a solution $(\Fam, \tau)$ is considered \emph{valid} for an input collection $\Delta$ if $\sum_{S \in \Fam_i} \tau(S) = \delta_i$ for all $i \in [n]$. In the following we will describe two different problem dimensions that can be used to define different variants of the base problem described above.

\subsubsection{Family constraints.}
We consider several restrictions on the family $\Fam$ that can be used in a solution. Such restrictions may be imposed or may be natural in a specific context in which this general optimization problem is applied, and each will lead to a different specific optimization problem.
\begin{description}
    \item[(G)iven:] The family of sets $\Fam$ is specified as part of the input. This is useful when the input data already has an imposed structure that the transformation should follow. To ensure feasibility of the problem, we will assume that $\Fam$ always contains at least all singleton sets.
    \item[(D)isjoint:] All sets in $\Fam$ must be disjoint. This allows for parallel execution of all translations, but strongly restricts the space of valid solutions.
    \item[(H)ierarchical:] The sets in $\Fam$ must form a hierarchy, meaning that for any two sets $S, S' \in \Fam$, either $S \subseteq S'$, $S' \subseteq S$, or $S \cap S' = \emptyset$. This restriction allows for a relatively natural staged transformation, where first the entire collection is translated, and then the translation is broken down into smaller disjoint parts, until all points arrive at their intended destination. Note that, with this restriction, $\Fam$ can efficiently be represented as a rooted tree, where each leaf corresponds to a single point index, and each internal node represents an element of $\Fam$ corresponding to the set of all indices in the subtree of that node. We will also use this tree representation in our results.
    \item[(F)ree:] There is no restriction on $\Fam$, allowing for the best possible transformation. However, it is not directly clear how to effectively visualize the resulting transition. 
\end{description}

\subsubsection{Optimization criteria.}
We also consider two different complexity measures for the transformations represented by $(\Fam, \tau)$. 
\begin{description}
    \item[(C)ardinality:] The number of subsets in $\Fam$, or $|\Fam|$. Note that this complexity measure is not meaningful if $\Fam$ is specified in the input (the ``Given'' variant). 
    \item[(L)ength:] The total length translated by all groups, computed as $\sum_{S \in \Fam} \|\tau(S)\|$, where $\|\cdot\|$ indicates some norm in $\Reals^d$. Unless specified otherwise, we will assume that $\|\cdot\|$ is the Euclidean norm. 
\end{description}
\subsubsection{Naming scheme and organization.}
We study the complexity of the variants of our optimization problem described above. We will mostly focus on the $1$-dimensional and $2$-dimensional versions of the problems, as they are most relevant for the applications we have in mind. To refer to a specific variant of the problem, we use acronyms based on the following naming scheme: \emph{Minimum} [Complexity measure] [Family constraint] \emph{Transformation}, where the complexity measure and family constraint specify one of the options above, using their first letter (in brackets). For example, the MLHT problem refers to the problem where we want to find a solution $(\Fam, \tau)$ for which $\Fam$ represents a hierarchy and where $\sum_{S \in \Fam} \|\tau(S)\|$ is minimized over all (hierarchical) solutions. We also refer to an optimal solution to this problem for some input $\Delta$ simply as an \emph{MLHT} of $\Delta$. Some of our results will require further restrictions on the optimization problem, but we will simply state those restrictions in the respective results.

In Section~\ref{sec:given} we consider variants of the problem where $\Fam$ is specified in the input (the ``Given'' variant). Sections~\ref{sec:length} and $\ref{sec:cardinality}$ cover the remaining family constraints using the different complexity measures: Section~\ref{sec:length} focuses on variants of the problem that use the Length complexity measure, and in Section~\ref{sec:cardinality} we discuss the variants that use the Cardinality complexity measure. We conclude the paper in Section~\ref{sec:conclusion}, where we also describe several additional problem dimensions of the classification that result in new open problems for future work. See Table~\ref{tab:overview} for an overview of our results.


\begin{table}[t]
\caption{Overview of results based on our problem classification.}
\label{tab:overview}
\small
\begin{tabu} to \linewidth {X[3,l,m]|X[1.5,l,m]X[2,l,m]X[2.1,l,m]|X[4.5,l,m]X[1.8,l,m]}
        \toprule
        \bfseries Constraint & 
        \multicolumn{1}{l|}{\bfseries d} &
        \multicolumn{2}{l|}{\bfseries Cardinality} &
        \multicolumn{2}{l}{\bfseries Length}  \\  
        \midrule
        \multicolumn{1}{l|}{\multirow{2}{*}{Given}} & $d=1$ & \multicolumn{1}{|l}{n/a} && $O(n)$ if $\Fam$ is hierarchical & (Th.~\ref{thm:treemedian}) \\ 
        \multicolumn{1}{l|}{} & $d \geq 1$ & \multicolumn{1}{|l}{n/a} & & Polynomial &(Th.~\ref{thm:givenconvex}) \\
        \midrule
        \multicolumn{1}{l|}{Disjoint} & \multicolumn{1}{l|}{$d \geq 1$}  & $O(d n\log n)$ & (Th.~\ref{thm:disjoint}) & $O(d n \log n)$ &(Th.~\ref{thm:disjoint}) \\ 
        \midrule
        \multicolumn{1}{l|}{\multirow{2}{*}{Hierarchical}} & \multicolumn{1}{l|}{$d = 1$} & $O(n\log n)$ & (Th.~\ref{thm:treecount}) & $O(n\log n)$ &(Th.~\ref{thm:MLT}) \\ 
        \multicolumn{1}{c|}{} & \multicolumn{1}{l|}{$d \geq 2$} & $O(d n\log n)$ &(Th.~\ref{thm:treecount}) & NP-Hard & (Th.~\ref{thm:MTLT-NPhard}) \\ 
        \midrule
        \multicolumn{1}{l|}{\multirow{2}{*}{Free}} & \multicolumn{1}{l|}{$d = 1$} & NP-Hard\footnotemark{} &(Th.~\ref{thm:MFCTNPhard}+\ref{thm:MFCTdimreduce}) & $O(n\log n)$ &(Th.~\ref{thm:MLT}) \\ 
        \multicolumn{1}{l|}{} & \multicolumn{1}{l|}{$d \geq 2$} & NP-Hard\addtocounter{footnote}{-1}\footnotemark{} &(Th.~\ref{thm:MFCTNPhard}) & $1.307$-approximation &(Th.~\ref{thm:approx}) \\
        \bottomrule
        \end{tabu}
\end{table}
\addtocounter{footnote}{-1} 
\stepcounter{footnote}\footnotetext{This result is for monotone MCFT (see Section~\ref{sec:cardinality}).}

\begin{theorem}\label{thm:disjoint}
We can compute the MCDT and the MLDT of a collection $\Delta$ of $n$ vectors in $\Reals^d$ in $O(d n \log n)$ time.
\end{theorem}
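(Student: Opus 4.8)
The plan is to first establish a structural characterization of valid solutions under the disjoint constraint, and then observe that it makes both optimization criteria trivial to evaluate. The key observation is that disjointness forces $|\Fam_i| \le 1$ for every index $i$: since the sets in $\Fam$ are pairwise disjoint, each index lies in at most one group. The validity condition $\sum_{S \in \Fam_i} \tau(S) = \delta_i$ then collapses to one of two cases: either $i$ lies in no group and $\delta_i = 0$ is forced, or $i$ lies in a unique group $S$ and $\tau(S) = \delta_i$. In particular, any two indices $i, j$ in a common group $S$ satisfy $\delta_i = \tau(S) = \delta_j$, so a group can only contain indices that share a single common vector, and that vector is exactly the group's translation.

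This characterization reduces every valid disjoint solution to a partition of (a subset of) the indices into groups that are \emph{monochromatic} with respect to the $\delta$-values, where indices with $\delta_i = 0$ may be omitted entirely. From here I would argue that the single grouping placing all indices sharing a given nonzero value into one group is simultaneously optimal for both measures. For cardinality, each group realizes exactly one translation vector, and every distinct nonzero value must be realized by at least one group; hence $|\Fam|$ is at least the number of distinct nonzero values, a bound this grouping attains. For length, merging two groups with equal translation $v$ into one strictly decreases $\sum_{S \in \Fam} \|\tau(S)\|$ from $2\|v\|$ to $\|v\|$ while preserving validity, so an optimum uses exactly one group per distinct value and achieves total length $\sum_{v} \|v\|$ over the distinct nonzero values $v$. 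Thus the MCDT equals the number of distinct nonzero $\delta_i$, the MLDT equals $\sum_v \|v\|$ over distinct nonzero $v$, and crucially the \emph{same} family $\Fam$ is optimal for both, so both answers can be reported together.

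For the running time I would identify the distinct values by sorting the $n$ input vectors lexicographically (any total order suffices), costing $O(n \log n)$ comparisons, each of which compares two $d$-dimensional vectors in $O(d)$ time, for a total of $O(d n \log n)$. A subsequent linear scan then counts the distinct nonzero values and accumulates their Euclidean norms in $O(dn)$ time, yielding both quantities within the stated bound.

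I do not expect a genuine obstacle here: the only real content is the structural observation that disjointness forces identical $\delta$-values within a group, after which everything is a direct counting and summation argument. The mild subtlety to handle carefully is the treatment of zero vectors (they must be excluded from $\Fam$ so as not to inflate the cardinality, which is consistent with leaving those indices in no group), together with the verification that a single optimal family serves both objectives.
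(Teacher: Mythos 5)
Your proposal is correct and follows essentially the same route as the paper: disjointness forces $\tau(S)=\delta_i$ for every $i\in S$, so groups must consist of indices with identical difference vectors, and a lexicographic sort in $O(dn\log n)$ time yields a grouping that is simultaneously optimal for both cardinality and length. Your explicit treatment of indices with $\delta_i=0$ (leaving them outside $\Fam$) is a minor refinement the paper's brief argument glosses over, but it does not change the approach.
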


As for the Disjoint variants of the problem, we state the results directly here, as they are very straightforward. Note that, if $\Fam$ is disjoint, then every index $i$ can be in at most one set $S \in \Fam$, and hence $\Fam_i = \{S\}$. Thus, for a valid solution, we require that $\tau(S) = \delta_i$ for all $i \in S$. As a result, two points can be part of the same group translation if and only if their translation vectors are exactly the same. To optimize for the Cardinality of $\Fam$, we can simply sort $\Delta$ (say, in lexicographical order) and make groups of equal difference vectors. Note that this also directly fixes $\tau$, and hence it also optimizes for Length.

\section{Given Family}\label{sec:given}
In this section we consider the MLGT problem, where we are given both $\Delta$ and $\Fam$ as input. We first state a general result on this problem, before we consider an efficient algorithm for a specific variant in more detail. 

\begin{restatable}{theorem}{MLGTconvex}
\label{thm:givenconvex}
The MLGT problem in $\Reals^d$ is a convex optimization problem, and can hence be solved in (weak) polynomial time.    
\end{restatable}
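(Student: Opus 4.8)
The plan is to exhibit the MLGT problem explicitly as the minimization of a convex function over an affine subspace, and then invoke a standard polynomial-time method for convex optimization. The decision variables are the translation vectors $\tau(S) \in \Reals^d$, one per set $S \in \Fam$, so the search space is $\Reals^{d|\Fam|}$. The validity condition $\sum_{S \in \Fam_i} \tau(S) = \delta_i$ is, for each $i \in [n]$, a system of $d$ linear equations in these variables; collecting them over all $i$ gives a single linear system $M x = c$, where $M$ is a fixed $0/1$ incidence-type matrix determined by $\Fam$ and $c$ encodes $\Delta$. Hence the feasible region is an affine subspace, which is convex and, since $\Fam$ contains all singletons, nonempty: setting $\tau(\{i\}) = \delta_i$ and $\tau(S) = 0$ for every non-singleton $S$ is always valid.

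Next I would argue convexity of the objective. The map $\tau(S) \mapsto \|\tau(S)\|$ is convex for any norm, and the objective $\sum_{S \in \Fam} \|\tau(S)\|$ is a sum of such terms, each depending on a disjoint block of coordinates; a sum of convex functions is convex. Minimizing a convex function over a convex set is by definition a convex optimization problem, which establishes the first claim.

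For the running-time claim I would formulate the problem as a second-order cone program: introduce scalars $t_S \geq 0$ for $S \in \Fam$, impose the cone constraints $\|\tau(S)\| \le t_S$, retain the linear equalities $M x = c$, and minimize $\sum_{S \in \Fam} t_S$. Such programs admit interior-point algorithms that, given a target accuracy $\varepsilon$, return an $\varepsilon$-approximate optimum in time polynomial in the input bit-size and $\log(1/\varepsilon)$. Before applying such a method I would verify that the minimum is attained and that it suffices to search a bounded region: moving along any nonzero direction $v$ of the feasible affine subspace (so $Mv=0$) forces $\|\tau(S) + t v_S\| \to \infty$ for some block with $v_S \neq 0$, so the objective is coercive on the feasible set and its sublevel sets are compact; this also yields an a priori bound on $\|x\|$ at the optimum in terms of the input, which is exactly what interior-point or ellipsoid-based methods require.

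I expect the main subtlety to lie in the qualifier \emph{weak} polynomial rather than in the convexity argument. Because the optimum of a sum of Euclidean norms is generically irrational (already for a single Fermat-point-style term), we cannot hope to output it exactly in the arithmetic model; the honest statement is that for every $\varepsilon > 0$ we can compute a feasible solution within $\varepsilon$ of optimal in time polynomial in the input size and $\log(1/\varepsilon)$. I would therefore phrase the conclusion in that approximate sense, which is precisely what solving the problem in weak polynomial time means.
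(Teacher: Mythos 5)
Your proposal is correct and follows essentially the same route as the paper: express the validity conditions as linear equalities in the variables $\tau(S)$, observe that the objective is a sum of norms and hence convex, note that the formulation has polynomial size, and invoke a standard polynomial-time convex solver (the paper uses the ellipsoid method where you use an interior-point SOCP formulation, an immaterial difference). Your added remarks on feasibility via the singleton sets, coercivity of the objective on the feasible affine subspace, and the precise $\varepsilon$-approximate meaning of \emph{weak} polynomial time are all sound and make explicit points the paper leaves implicit.
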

\begin{proof}
Let $\Fam$ and $\Delta = \{\delta_1, \hdots, \delta_n\}$ be the input of this problem. We can use $\tau$ to encode a solution. A solution is valid if $\sum_{S \in \Fam_i} \tau(S) = \delta_i$ for all $i \in [n]$. Note that these constraints are all linear. Furthermore, the goal is to optimize $\sum_{S \in \Fam} \|\tau(S)\|$. Since all norms on $\Reals^d$ are convex (due to the triangle inequality), the optimization function is also convex. The number of variables in this convex optimization problem corresponds to the number of degrees of freedom in $\tau$, which is $O(d |\Fam|)$. Furthermore, the number of constraints is $O(d n)$. Thus, the resulting convex optimization problem has polynomial size with respect to the input, and hence can be solved in (weak) polynomial time using existing methods such as the ellipsoid method~\cite{ellipsoidMethod}.
\end{proof}

We now consider a variant of the MLGT problem for which we provide an algorithm that runs in linear time. Specifically, we consider the $1$-dimensional version of the problem, and we assume that the given family $\Fam$ is a hierarchy, including all singleton sets. As already explained in Section~\ref{sec:framework} we can hence represent $\Fam$ as a rooted tree $T$ (see Fig.~\ref{fig:CommonFate}). Furthermore, $\Delta$ now simply consists of a collection of real numbers. 

\begin{figure}[h]
    \centering
\includegraphics[width=.4\textwidth]{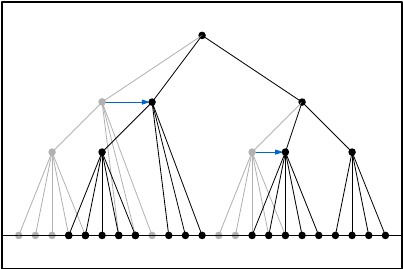}
    \caption{Translating a group in $\Fam$ (blue arrow) corresponds to moving an entire subtree.}
    \label{fig:CommonFate}
\end{figure}

This problem variant is closely related to the median-finding problem, which can be seen as follows. Assume that $\Fam$ consists only of $[n]$ and all singleton sets. If we ignore the cost of the translation of the group $S = [n]$ itself, then the optimal translation for $S$ is exactly a median of all numbers in $\Delta$, because the sum of all individual translations will be minimized. Therefore, this problem can be seen as a generalization of the median-finding problem.

For our algorithm we first define a generalization of a median for a collection of intervals. Let $I =\{[a_1, b_1], \hdots, [a_k, b_k]\}$ be a collection of closed intervals in $\Reals$. Furthermore, define the distance between $x \in \Reals$ and an interval $[a, b]$ as: 
\[
d(x, [a,b]) = \begin{cases}
    0, & \text{if } x \in [a,b] \\
    x - b, & \text{if } x > b \\
    a - x, & \text{if } x < a.
\end{cases}
\]
Now the \emph{interval median} $M(I)$ of a collection of intervals $I$ is the set of real numbers $x$ for which $\sum_i d(x, [a_i, b_i])$ is minimized. 

\begin{restatable}{lemma}{MedianInterval}\label{lem:medianinterval}
The interval median $M(I)$ of a collection of closed intervals $I = \{[a_1, b_1], \hdots [a_k, b_k]\}$ is itself a closed interval. Furthermore, $\sum_i d(x', [a_i, b_i]) \geq \sum_i d(x, [a_i, b_i]) + d(x', M(I))$ for all $x \in M(I)$ and $x' \in \Reals$.    
\end{restatable}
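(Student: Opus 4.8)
The plan is to exploit the fact that each term $d(x,[a_i,b_i])$ is a convex, piecewise-linear function of $x$ whose slope takes only the values $-1$, $0$, and $+1$: it has slope $-1$ on $(-\infty,a_i)$, slope $0$ on $[a_i,b_i]$, and slope $+1$ on $(b_i,\infty)$. Summing over $i$, the objective $f(x) = \sum_i d(x,[a_i,b_i])$ is therefore convex and piecewise linear, with all breakpoints among the $2k$ values $a_i,b_i$, and with an \emph{integer} slope on each linear piece. Moreover $f(x)\to+\infty$ as $x\to\pm\infty$, since every term does.

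For the first claim I would invoke convexity directly: the set of minimizers of a convex function is convex, hence an interval; since $f$ is continuous and coercive (growing unboundedly in both directions), this minimizer set $M(I)$ is nonempty, bounded, and closed, say $M(I)=[p,q]$ with possibly $p=q$. All points of $M(I)$ share the same value, which I denote $f_{\min}$.

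For the second claim, observe that since every $x\in M(I)$ gives $f(x)=f_{\min}$, the right-hand side $f(x)+d(x',M(I))$ does not depend on the chosen $x\in M(I)$, so it suffices to prove $f(x')\ge f_{\min}+d(x',[p,q])$ for every $x'\in\Reals$. This is immediate for $x'\in[p,q]$, where both sides equal $f_{\min}$. The heart of the argument is the case $x'>q$ (the case $x'<p$ being symmetric): because $q$ is the right endpoint of the minimizer interval, $f$ is strictly increasing immediately to the right of $q$, so the slope of $f$ on the linear piece just to the right of $q$ is a \emph{positive integer}, hence at least $1$; by convexity the slope is non-decreasing and thus remains $\ge 1$ for all $x>q$. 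Integrating the slope from $q$ to $x'$ then yields $f(x')-f_{\min}=f(x')-f(q)\ge x'-q=d(x',[p,q])$, as required.

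The main obstacle, such as it is, lies in justifying cleanly that the one-sided slope just outside $[p,q]$ is a nonzero integer rather than merely nonzero. This follows from the explicit form of the slope of $f$ as a count: on the piece immediately to the right of $q$ the slope equals $|\{i : b_i\le q\}|-|\{i : a_i>q\}|$, which is manifestly an integer, and it is strictly positive precisely because $q$ is the largest minimizer (were it nonpositive, points slightly to the right of $q$ would not exceed $f_{\min}$, contradicting maximality of $q$). With this count in hand, the remaining estimate is a routine consequence of convexity and the fundamental theorem of calculus for piecewise-linear functions.
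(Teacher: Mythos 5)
Your proof is correct and is essentially the paper's argument recast in the language of convex piecewise-linear functions: the paper's counts $|L(x)|$, $|R^+(x)|$, etc.\ are exactly your one-sided slopes, and both proofs hinge on the same key observation that the slope immediately outside the minimizer interval is a nonzero integer, hence at least $1$ in absolute value, which yields the additive $d(x',M(I))$ term. The convexity framing packages the paper's explicit $\varepsilon$-perturbation steps a bit more cleanly but does not change the substance.
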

\begin{proof}
For a value $x \in \Reals$, let $L(x)$ be the collection of intervals $[a, b] \in I$ strictly to the left of $x$ ($x > b$). Furthermore, let $L^+(x)$ consist of all intervals $[a, b] \in I$ with $x \geq b$. Similarly, let $R(x)$ be the collection of intervals $[a, b] \in I$ strictly to the right of $x$ ($x < a$), and let $R^+(x)$ consist of all intervals $[a, b] \in I$ with $x \leq a$. Now assume that $|L(x)| > |R^+(x)|$. In that case there exists an $\varepsilon > 0$ such that $d(x-\varepsilon, [a, b]) = d(x, [a, b]) - \varepsilon$ for $[a, b] \in L(x)$, $d(x-\varepsilon, [a, b]) = d(x, [a, b]) + \varepsilon$ for $[a, b] \in R^+(x)$, and $d(x-\varepsilon, [a, b]) = d(x, [a, b])$ for $[a, b] \in I \setminus (L(x) \cup R^+(x))$. Thus, $\sum_i d(x - \varepsilon, [a_i, b_i]) = \sum_i d(x, [a_i, b_i]) + \varepsilon (|R^+(x)| - |L(x)|) < \sum_i d(x, [a_i, b_i])$. Thus, if $x \in M(I)$, then $|L(x)| \leq |R^+(x)|$. By considering the value $x + \varepsilon$ for some sufficiently small $\varepsilon > 0$, we can symmetrically argue that $|R(x)| \leq |L^+(x)|$ for all $x \in M(I)$. Now consider the set of real numbers $x$ such that $|L(x)| \leq |R^+(x)|$. As $|R^+(x)| - |L(x)|$ can only decrease as $x$ increases, and any decrease takes place strictly after the boundary of one of the intervals in $I$, this set of real numbers must take the form of an interval $(-\infty, z]$, where $z$ is a boundary of an interval of $I$. Similarly, the set of real numbers for which $|R(x)| \leq |L^+(x)|$ must take the form of an interval $[y, \infty)$. $M(I)$ can now be obtained as the intersection $[y, z]$ of these intervals, which is thus a closed interval. 

For the second statement of the lemma, consider any $x' \in \Reals$. If $x' \in M(I) = [y, z]$, then the statement holds directly. Instead assume that $x' > z$. Then we have that $|L(x)| > |R^+(x)|$ for all $x \in (z, x']$. Using a similar line of argument as above, together with the fact that $|L(x)| - |R^+(x)| \geq 1$ for all $x \in (z, x']$, this implies that $\sum_i d(x', [a_i, b_i]) \geq \sum_i d(z, [a_i, b_i]) + (x' - z) = \sum_i d(z, [a_i, b_i]) + d(x', M(I))$. The case that $x' < y$ is symmetrical.
\end{proof}

Note that if all intervals consist of a single point, that is, $a_i = b_i$ for all $i$, then $M(I)$ is simply the interval between the lower and upper median of $\{a_1, \hdots, a_k\}$. We will now extend this concept of an interval median to a hierarchy. Let $\Delta$ and $\Fam$ be our input, where $\Delta$ is in $\Reals$ and $\Fam$ is a hierarchy including all singleton groups. We represent $\Fam$ by a rooted tree $T$ where all leaves represent point indices, and all internal nodes represent groups in $\Fam$ (singleton groups correspond to internal nodes with a single leaf child). For an internal node $v \in T$, let $S(v)$ be the corresponding group in $\Fam$. For any node $v \in T$, we can now compute $M(v)$ in a bottom-up fashion as follows: If $v$ is a leaf with index $i$, then $M(v) = [\delta_i, \delta_i]$; otherwise, if $v$ has children $u_1, \hdots, u_k \in T$, then $M(v)$ is the interval median of $\{M(u_1), \hdots, M(u_k)\}$. We can now extend Lemma~\ref{lem:medianinterval} to trees.

\begin{lemma}\label{lem:treemedian}
Let $\Delta$ be a collection of real numbers, and let $\Fam$ be a given hierarchy represented by a rooted tree $T$ with root $r$. There exists a nonnegative constant $c \in \Reals$ such that, for every $x' \in \Reals$, the minimum value of $\sum_{S \in \Fam} |\tau(S)|$ over all valid $\tau\colon \Fam \rightarrow \Reals$ with $\tau(S(r)) = x'$ is at least $d(x', M(r)) + c$.    
\end{lemma}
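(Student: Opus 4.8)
The plan is to prove a stronger, shift-parametrized statement for every subtree by bottom-up induction on $T$, and then read off the lemma at the root. For a node $v$, let $T_v$ be the subtree rooted at $v$, let $\Fam(v)$ be the groups corresponding to the internal nodes of $T_v$, and for $s \in \Reals$ define
\[
\Phi_v(s) = \min \sum_{S \in \Fam(v)} |\tau(S)|,
\]
where the minimum ranges over all assignments $\tau$ such that every leaf $i$ of $T_v$ satisfies $\sum_{w} \tau(S(w)) = \delta_i - s$, the sum taken over the internal nodes $w$ on the path from $v$ to $i$. Intuitively $s$ is the translation already contributed by the ancestors of $v$, so $\Phi_v(s)$ is the cheapest way to complete the transformation inside $T_v$. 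I would first record the basic recursion: fixing $\tau(S(v)) = x$ decouples the child subtrees and shifts each of them by an extra $x$, so that
\[
\Phi_v(s) = \min_x \Bigl[ |x| + \textstyle\sum_j \Phi_{u_j}(s + x) \Bigr],
\]
where $u_1, \dots, u_k$ are the children of $v$.

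The claim I would establish by induction is: there is a nonnegative constant $c_v$ with $\Phi_v(s) \geq d(s, M(v)) + c_v$ for all $s$. For the base case I take a singleton node $v = \{i\}$ (an internal node whose only child is the leaf $i$): the constraint forces $\tau(S(v)) = \delta_i - s$, so $\Phi_v(s) = |\delta_i - s| = d(s, [\delta_i, \delta_i]) = d(s, M(v))$ and $c_v = 0$ suffices. For the inductive step at an internal node $v$, I substitute the inductive hypotheses $\Phi_{u_j}(s+x) \geq d(s+x, M(u_j)) + c_{u_j}$ into the recursion and set $t = s + x$, obtaining
\[
\Phi_v(s) \geq C + \min_t \Bigl[ |t - s| + \psi(t) \Bigr], \qquad \psi(t) := \textstyle\sum_j d(t, M(u_j)), \quad C := \textstyle\sum_j c_{u_j}.
\]
Since $M(v)$ is by definition the interval median of $\{M(u_1), \dots, M(u_k)\}$, that is, the set of minimizers of $\psi$, Lemma~\ref{lem:medianinterval} gives $\psi(t) \geq c^\ast + d(t, M(v))$ with $c^\ast = \min_t \psi(t) \geq 0$. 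Combining this with the elementary fact that $d(\cdot, M(v))$ is $1$-Lipschitz (hence $\min_t[\,|t-s| + d(t, M(v))\,] = d(s, M(v))$) yields $\Phi_v(s) \geq d(s, M(v)) + c_v$ with $c_v = C + c^\ast \geq 0$, closing the induction.

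Finally I would deduce the lemma at the root $r$, where there is no ancestor shift: fixing $\tau(S(r)) = x'$ makes the quantity in the statement equal to $|x'| + \sum_j \Phi_{u_j}(x')$ for the children $u_j$ of $r$. Applying the inductive claim to each child and then Lemma~\ref{lem:medianinterval} exactly as above bounds $\sum_j \Phi_{u_j}(x')$ below by $d(x', M(r)) + c$ with $c = c^\ast + \sum_j c_{u_j} \geq 0$; dropping the nonnegative term $|x'|$ gives the stated inequality. The main obstacle is conceptual rather than computational: the lemma, phrased only for the root, does not admit a direct induction, and one must first generalize it to every subtree carrying a free shift parameter $s$ and get the shift bookkeeping in the recursion right. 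A secondary point worth flagging is that the root's own cost $|x'|$ is simply discarded, which is exactly why the bound is an inequality rather than an equality, and why $c$ is the accumulated optimum of the interval-median costs over the whole tree.
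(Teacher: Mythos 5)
Your proof is correct and takes essentially the same route as the paper's: a bottom-up induction over the tree in which each node's lower bound is obtained by combining the children's bounds via Lemma~\ref{lem:medianinterval}. Your shift-parametrized value function $\Phi_v(s)$ simply makes explicit the strengthened induction hypothesis that the paper applies somewhat implicitly when it perturbs an optimal $\tau$ on the subtrees.
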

\begin{proof}
We prove the result by induction on the height of $T$. 

\noindent \textbf{Base case:} $T$ has height $2$, and all children $u_1, \hdots, u_k$ have only a single (leaf) child in $T$. Without loss of generality, we assume that the single child of $u_i$ is the index $i$, for all $i \in [k]$. For $\tau$ to be valid, we require that $\tau(S(u_i)) = \delta_i - x'$, and hence $|\tau(S(u_i))| = d(x', [\delta_i, \delta_i]) = d(x', M(u_i))$. The result now follows directly from Lemma~\ref{lem:medianinterval} and the construction of $M(r)$, where $c$ constitutes as $\sum_i d(x, [a_i, b_i])$, for some $x \in M(I)$.

\noindent \textbf{Induction step:} Let $T$ be a tree of height $h > 2$, and assume that the statement holds for all trees of height smaller than $h$. Again, let $u_1, \hdots, u_k$ be the children of $r$ in $T$. Now consider an optimal valid function $\tau$ with $\tau(S(r)) = x'$. If $x' + \tau(S(u_i)) \notin M(u_i)$, then we can adapt $\tau(S(u_i))$ by the minimum amount $y$ such that $x' + \tau(S(u_i)) \in M(u_i)$. By the induction hypothesis on $u_i$, this will decrease the optimal cost of $\tau$ in the subtree rooted at $u_i$ by at least $y$. Furthermore, the cost of $\tau(S(u_i))$ can increase by at most $y$. Thus, we may assume that there exists an optimal valid function $\tau$ such that $x' + \tau(S(u_i)) \in M(u_i)$ for all $i \in [k]$. Additionally, we can ensure that $|\tau(S(u_i))| = d(x', M(u_i))$, as the optimal cost in the subtree rooted at $u_i$ is not affected as long as $x' + \tau(S(u_i)) \in M(u_i)$, according to the induction hypothesis. The result now follows directly from Lemma~\ref{lem:medianinterval} and the construction of $M(r)$.
\end{proof}

The result in Lemma~\ref{lem:treemedian} can almost directly be used to compute an MLGT for $\Delta$ and $\Fam$. To compute the result efficiently, we use the following lemma.

\begin{restatable}{lemma}{EfficientMedian}\label{lem:efficientmedian}
Let $I = \{[a_1, b_1], \hdots, [a_k, b_k]\}$ be a collection of intervals, then the interval median $M(I)$ can be computed in $O(k)$ time.   
\end{restatable}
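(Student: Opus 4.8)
The plan is to reduce computing $M(I)$ to two order-statistic queries on the multiset of all $2k$ interval endpoints, and then apply a linear-time selection algorithm.

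I would start from the characterization established in the proof of Lemma~\ref{lem:medianinterval}: $M(I) = [y,z]$, where $z$ is the largest $x \in \Reals$ with $|L(x)| \le |R^+(x)|$, and $y$ is the smallest $x$ with $|R(x)| \le |L^+(x)|$; here $|L(x)| = |\{i : b_i < x\}|$, $|R^+(x)| = |\{i : a_i \ge x\}|$, $|R(x)| = |\{i : a_i > x\}|$ and $|L^+(x)| = |\{i : b_i \le x\}|$. The key step is to rewrite both conditions purely in terms of the merged multiset $E = \{a_1, b_1, \dots, a_k, b_k\}$ of all $2k$ endpoints. Since $|R^+(x)| = k - |\{i : a_i < x\}|$, the inequality $|L(x)| \le |R^+(x)|$ is equivalent to $|\{i : a_i < x\}| + |\{i : b_i < x\}| \le k$, i.e.\ to the statement that at most $k$ elements of $E$ are strictly smaller than $x$. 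Symmetrically, using $|R(x)| = k - |\{i : a_i \le x\}|$, the inequality $|R(x)| \le |L^+(x)|$ is equivalent to $|\{i : a_i \le x\}| + |\{i : b_i \le x\}| \ge k$, i.e.\ to the statement that at least $k$ elements of $E$ are at most $x$.

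I would then translate these into order statistics. Let $e_1 \le \dots \le e_{2k}$ denote the sorted multiset $E$. The largest $x$ for which fewer than $k{+}1$ endpoints are strictly below $x$ is exactly $e_{k+1}$ (for $x \le e_{k+1}$ at most $k$ endpoints are strictly smaller, while for $x > e_{k+1}$ at least $k{+}1$ are), so $z = e_{k+1}$. Likewise the smallest $x$ for which at least $k$ endpoints are at most $x$ is exactly $e_k$, so $y = e_k$. A short check confirms that ties (coincident endpoints, including the degenerate case $e_k = e_{k+1}$ where $M(I)$ is a single point) do not affect these identities. Hence $M(I) = [e_k, e_{k+1}]$.

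Finally, computing $M(I)$ reduces to selecting the $k$-th and $(k{+}1)$-th smallest elements of the size-$2k$ set $E$, which can be done with the deterministic median-of-medians selection algorithm in $O(k)$ time each, giving the claimed bound. The one genuinely nontrivial part is the reduction itself: recognizing that the two one-sided balance conditions collapse, via the identity $|R^+(x)| = k - |\{i : a_i < x\}|$, into a single rank condition on the merged endpoint multiset. After that, everything is standard linear-time selection and no sorting is required.
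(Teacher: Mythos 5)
Your proposal is correct and arrives at exactly the same characterization as the paper---$M(I)$ is the interval $[e_k,e_{k+1}]$ between the lower and upper median of the merged multiset of all $2k$ endpoints---followed by the same $O(k)$ selection step. The only difference is how the formula is justified: the paper gives a self-contained counting argument comparing $\sum_i d(u,[a_i,b_i])$ at nearby points $u$ and $u+\varepsilon$, whereas you reuse the balance conditions $|L(x)|\le|R^+(x)|$ and $|R(x)|\le|L^+(x)|$ established inside the proof of Lemma~\ref{lem:medianinterval} and collapse them into rank conditions on the endpoint multiset; both are sound, and yours is arguably the more economical derivation given that the earlier lemma has already done the work.
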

\begin{proof}
We will show that $M(I)$ is simply the interval between the lower median and upper median of the collection $\{a_1, b_1, \hdots, a_k, b_k\}$. The interval median can then be computed in $O(k)$ time using standard median-finding algorithms. 

Let $[x, y]$ be the interval between the lower median and upper median of the collection $Z = \{a_1, b_1, \hdots, a_k, b_k\}$. Note that both $x$ and $y$ are endpoints of some (possibly different) intervals. Furthermore, we can choose a partitioning of $Z$ into $Z^-$, $Z^+$, and $\{x, y\}$, where $|Z^-| = |Z^+| = k-1$, $z \leq x$ for all $z \in Z^-$, and $y \leq z$ for all $z \in Z^+$. Now consider any $u \in \Reals$ for which $u < x$. Then at least $k+1$ elements of $Z$, specifically $Z^+ \cup \{x, y\}$, are strictly to the right of $u$. Now assume that there are $r$ intervals in $I$ for which both endpoints are strictly to the right of $u$. Then there must be $k+1-2r$ intervals in $I$ that contain $u$. This leaves $k-1 - (k+1 - 2r) = 2(r-1)$ endpoints in $Z^-$ to make intervals to the left of $u$, of which there can be at most $r-1$. Thus, for some $\varepsilon > 0$, $\sum_i d(u, [a_i, b_i]) > \sum_i d(u+\varepsilon, [a_i, b_i])$, and hence $u \notin M(I)$. We can make a symmetric argument for any $u > y$. Finally, consider a real value $u \in (x, y)$ (assuming that $x \neq y$). Using the same line of argument as above, we can establish that the number of intervals in $I$ strictly to the left of $u$ is equal to the number of intervals in $I$ strictly to the right of $u$. As a result, $\sum_i d(u, [a_i, b_i])$ is equal for all $u \in [x, y]$. Note that we can include the endpoints $x$ and $y$, since the resulting interval must be closed. Thus, $M(I) = [x, y]$. This also holds if $x = y$, since the interval median must exist. 
\end{proof}

The algorithm now simply computes all median intervals in a bottom-up fashion, and then chooses optimal translations inside the median intervals in a top-down fashion. We get the following result.

\begin{restatable}{theorem}{treemedian}\label{thm:treemedian}
Let $\Delta$ be a collection of $n$ real values, and let $\Fam$ be a given hierarchy. We can compute the MLGT of $\Delta$ and $\Fam$ in $O(n)$ time.   
\end{restatable}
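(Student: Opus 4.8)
The plan is to convert Lemma~\ref{lem:treemedian} into a constructive two-pass algorithm over the tree $T$ representing $\Fam$, and then establish both correctness and an $O(n)$ running time. First I would run a bottom-up pass computing the median interval $M(v)$ of every node: for each leaf $i$, set $M(i) = [\delta_i, \delta_i]$, and for each internal node $v$ with children $u_1, \ldots, u_k$, compute $M(v)$ as the interval median of $\{M(u_1), \ldots, M(u_k)\}$ via Lemma~\ref{lem:efficientmedian}. Then I would run a top-down pass that fixes, for each node $v$, the cumulative position $p(v)$ accumulated from the root down to and including $v$: at the root $r$ I pick any $p(r) = x_r \in M(r)$ and set $\tau(S(r)) = x_r$; for every other node $v$ with parent $w$ I let $p(v)$ be the point of $M(v)$ closest to $p(w)$ and set $\tau(S(v)) = p(v) - p(w)$, so that $|\tau(S(v))| = d(p(w), M(v))$.

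For correctness I would first argue validity. The singleton group containing index $i$ is an internal node $v_i$ whose only child is the leaf $i$, so $M(v_i) = [\delta_i, \delta_i]$ and the top-down projection forces $p(v_i) = \delta_i$ regardless of $p(w)$. Since $p(v_i) = \sum_{S \in \Fam_i} \tau(S)$, every constraint $\sum_{S \in \Fam_i}\tau(S) = \delta_i$ is satisfied, so the produced $\tau$ is valid. For optimality I would invoke Lemma~\ref{lem:treemedian}: the minimum cost over all valid $\tau$ with $\tau(S(r)) = x'$ is at least $d(x', M(r)) + c$, which attains its minimum $c$ exactly when $x' \in M(r)$. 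It then remains to show the constructed solution meets this bound, which follows the argument inside the proof of Lemma~\ref{lem:treemedian}: for $x_r \in M(r)$ the first-level cost $\sum_i d(x_r, M(u_i))$ is the constant minimum guaranteed by Lemma~\ref{lem:medianinterval}, and because each $p(u_i)$ is projected into $M(u_i)$, every recursive subtree cost is likewise at its minimum; summing yields total cost exactly $c$, matching the lower bound and certifying optimality.

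For the running time I would note that, since $\Fam$ is a hierarchy (a laminar family) containing all singletons, $T$ has $n$ leaves and only $O(n)$ internal nodes, hence $O(n)$ edges in total. In the bottom-up pass, computing $M(v)$ for a node with $k_v$ children costs $O(k_v)$ by Lemma~\ref{lem:efficientmedian}, and $\sum_v k_v$ equals the number of edges, which is $O(n)$; the top-down pass performs only $O(1)$ work per node. Therefore the entire algorithm runs in $O(n)$ time.

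I expect the main obstacle to be the optimality argument rather than the timing: I must confirm that greedily projecting each node's cumulative position onto its median interval in a single top-down sweep is globally optimal, and in particular that every choice of $x_r \in M(r)$ yields the same minimum cost $c$. This is precisely the content that Lemmas~\ref{lem:treemedian} and~\ref{lem:medianinterval} are engineered to supply: the former provides the lower bound $d(x', M(r)) + c$ and shows that pushing a child's position into its median interval never increases cost, while the latter shows the aggregate distance to the children's median intervals is constant across all of $M(r)$. The proof thus reduces to carefully citing these two facts and verifying that the two passes realize the extremal configuration they describe.
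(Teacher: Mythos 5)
Your two-pass structure, the validity argument via the singleton nodes, and the $O(n)$ accounting (charging each interval-median computation to the edges of $T$) all match the paper's proof. But there is a genuine error at the root of your top-down pass: you ``pick any $p(r) = x_r \in M(r)$'' and explicitly assert that every choice of $x_r \in M(r)$ yields the same minimum cost. That assertion is false, because the root group's own translation length $|\tau(S(r))| = |x_r|$ is itself a term of the objective $\sum_{S \in \Fam} |\tau(S)|$; what Lemma~\ref{lem:medianinterval} makes constant over $M(r)$ is only the cost of the \emph{remaining} translations. Concretely, take $\Delta = \{10, 20\}$ with $\Fam = \{\{1,2\},\{1\},\{2\}\}$, so $M(r) = [10,20]$: choosing $x_r = 20$ gives cost $20 + 10 + 0 = 30$, whereas $x_r = 10$ gives the optimum $10 + 0 + 10 = 20$. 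So your algorithm, as stated, can return a suboptimal solution.

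The fix is exactly what the paper does: treat the root like every other node but with a virtual parent sitting at position $0$ (the identity translation), i.e., set $\tau(S(r))$ to the point of $M(r)$ closest to $0$. Optimality of this choice follows because any valid solution with root translation $x'$ costs at least $|x'| + d(x', M(r)) + c$, which by the triangle inequality is at least $d(0, M(r)) + c$, and the projection of $0$ onto $M(r)$ attains this. Your handling of all non-root nodes (projecting the accumulated ancestor translation onto $M(v)$, so that $|\tau(S(v))| = d(p(w), M(v))$) is correct and coincides with the paper's; only the root needs this correction, and with it the rest of your argument and the $O(n)$ bound go through.
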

\begin{proof}
Let $T$ be the tree that represents $\Fam$. We first compute all interval medians in a bottom-up fashion for all nodes in $T$. By Lemma~\ref{lem:efficientmedian} this takes $O(n)$ time. Let $r$ be the root of $T$ and let $M(r) = [a, b]$. For $\tau(S(r))$ we choose $0$ if $a \leq 0 \leq b$, $a$ if $a > 0$, and $b$ if $b < 0$. By Lemma~\ref{lem:treemedian} this is an optimal choice for $\tau(S(r))$. We then propagate down the tree: for each internal node $v \in T$, let $M(v) = [a, b]$, and let $x$ be the total translation of its strict ancestor nodes. We set $\tau(S(v))$ to $0$ if $a \leq x \leq b$, $a - x$ if $a > x$, and $b - x$ if $b < x$. Again by Lemma~\ref{lem:treemedian}, this is an optimal choice for $\tau(S(v))$ with respect to the subtree of $T$ rooted at $v$. This completes the construction in $O(n)$ time.       
\end{proof}

\section{Optimizing Length}\label{sec:length}
In this section we consider the (remaining) variants of the problem that optimize the Length of the transformation, specifically MLHT and MLFT. 

For the proof of Theorem \ref{thm:MLT} we first provide a lower and upper bound for the optimal Solution of the MLFT in 1D in the two following Lemmas. For that we first define the \emph{span} of a collection $X=\{x_1, \hdots, x_n\}$ of real numbers as $\spanop(X) = \max_i x_i - \min_i x_i$.

\begin{lemma}
\label{lem:MFLT1D}
The MLFT of a collection $\Delta$ in $\Reals$ is at least $\spanop(\Delta \cup \{0\})$.
\end{lemma}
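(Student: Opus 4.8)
The plan is to prove the stronger \emph{pointwise} statement that, for any valid solution $(\Fam,\tau)$ of the MLFT on $\Delta$, the total length $\sum_{S\in\Fam}\lvert\tau(S)\rvert$ is at least $\lvert\delta_i-\delta_j\rvert$ for every pair of indices $i,j\in[n]$, and at least $\lvert\delta_i\rvert$ for every single index $i$. Once these ``separation'' bounds are in hand, the claimed lower bound on $\spanop(\Delta\cup\{0\})$ follows immediately by instantiating $i$ and $j$ at the indices that realize the extreme values of $\Delta\cup\{0\}$. Since the bound must hold for every valid solution, it in particular holds for the optimal one, giving the statement for the MLFT.

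The core computation I would carry out is a cancellation argument. By validity, $\delta_i=\sum_{S\in\Fam_i}\tau(S)$ and $\delta_j=\sum_{S\in\Fam_j}\tau(S)$, so $\delta_i-\delta_j=\sum_{S\in\Fam_i}\tau(S)-\sum_{S\in\Fam_j}\tau(S)$. Every group $S$ containing \emph{both} $i$ and $j$ contributes $\tau(S)$ to each sum and therefore cancels, leaving only the groups that separate the two indices: $\delta_i-\delta_j=\sum_{S:\,i\in S,\,j\notin S}\tau(S)-\sum_{S:\,j\in S,\,i\notin S}\tau(S)$. Applying the triangle inequality to this expression bounds $\lvert\delta_i-\delta_j\rvert$ by the sum of $\lvert\tau(S)\rvert$ over all separating groups, which is at most $\sum_{S\in\Fam}\lvert\tau(S)\rvert$. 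The single-index bound is the degenerate case of the same computation: $\lvert\delta_i\rvert=\bigl\lvert\sum_{S\in\Fam_i}\tau(S)\bigr\rvert\le\sum_{S\in\Fam_i}\lvert\tau(S)\rvert\le\sum_{S\in\Fam}\lvert\tau(S)\rvert$.

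To finish, let $M=\max(\Delta\cup\{0\})$ and $m=\min(\Delta\cup\{0\})$, so that $\spanop(\Delta\cup\{0\})=M-m$. If both extremes are attained by genuine displacements $\delta_p=M$ and $\delta_q=m$, the two-index bound gives total length at least $\lvert\delta_p-\delta_q\rvert=M-m$. If instead one extreme is the value $0$ (for instance all displacements are nonnegative, so $m=0$ and $M=\delta_p$), the single-index bound gives total length at least $\lvert\delta_p\rvert=M-m$; the case $M=m=0$ is trivial. The cleanest way to avoid this case split is to fold $0$ in as a \emph{virtual} index with displacement $0$ that belongs to no group, in which case the single-index inequality is literally the two-index inequality against this virtual index, and the whole argument reduces to ``total length $\ge\lvert\delta_i-\delta_j\rvert$ for any two indices, including the virtual one.'' I do not anticipate a genuine obstacle: the only points that need care are the cancellation of groups containing both chosen indices and the correct treatment of the $0$ term in the span, and the virtual-index device handles the latter uniformly.
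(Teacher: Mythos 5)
Your proof is correct, but it takes a genuinely different route from the paper's. The paper partitions the translation values by sign: the collection $\tau^+$ of positive translations must sum to at least $\max_i \delta_i$ (else the largest displacement is unreachable), the collection $\tau^-$ of negative translations must sum to at most $\min_i \delta_i$, and since these two collections are disjoint their combined absolute length is at least the span. You instead prove a pairwise separation bound: for any two indices, the groups containing both cancel in $\delta_i-\delta_j$, so the triangle inequality applied to the separating groups alone gives total length at least $\lvert\delta_i-\delta_j\rvert$ (with the single-index bound as the degenerate case, handled uniformly by your virtual zero index), and you then instantiate at the extremes. Both arguments are short and elementary; the trade-off is that your cancellation argument is norm- and dimension-agnostic, so it immediately yields the lower bound $\sum_{S\in\Fam}\lVert\tau(S)\rVert \geq \lVert\delta_i-\delta_j\rVert$ for the MLFT in $\Reals^d$ as well, whereas the paper's sign-based partition is intrinsically one-dimensional but more directly mirrors the structure of the matching upper-bound construction in Lemma~\ref{lem:MTLT1D} (a chain of rightward translations plus a chain of leftward ones).
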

\begin{proof}
Let $(\Fam, \tau)$ be the MLFT of $\Delta$. Consider the collection $\tau^+ = \{\tau(S)\mid S \in \Fam \wedge \tau(S) > 0\}$. We must have that $\sum_{x \in \tau^+} x \geq \max_i \delta_i$, for otherwise the solution $(\Fam, \tau)$ cannot be valid. Similarly, we can construct the collection $\tau^- = \{\tau(S)\mid S \in \Fam \wedge \tau(S) < 0\}$. We must have that $\sum_{x \in \tau^-} x \leq \min_i \delta_i$, for otherwise the solution $(\Fam, \tau)$ can again not be valid. Thus, the total length of $(\Fam, \tau)$ is at least $\max_i \delta_i$ and at least $-\min_i \delta_i$. If $\delta_i > 0$ for all $i \in [n]$, then $\spanop(\Delta  \cup \{0\}) = \max_i \delta_i$, and the result holds. Similarly, if $\delta_i < 0$ for all $i \in [n]$, then $\spanop(\Delta  \cup \{0\}) = -\min_i \delta_i$, and the result holds. Finally, since $\tau^-$ and $\tau^+$ are disjoint, the total length of $(\Fam, \tau)$ is at least $\max_i \delta_i - \min_i \delta_i$, which equals $\spanop(\Delta \cup \{0\})$ in all remaining cases.      
\end{proof}
\begin{lemma}\label{lem:MTLT1D}
The MLHT of a collection $\Delta$ in $\Reals$ is at most $\spanop(\Delta \cup \{0\})$.    
\end{lemma}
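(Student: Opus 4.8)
The plan is to prove the bound constructively: I would exhibit one explicit hierarchical solution $(\Fam, \tau)$ whose total length equals $\spanop(\Delta \cup \{0\})$, and since the MLHT is the minimum over all hierarchical solutions, this immediately yields the claimed upper bound. The guiding observation is that the obvious ``single caterpillar'' (translate the whole set to the smallest value, then peel off points one by one in sorted order, assigning each successive nested group the gap to the next value) works only when all $\delta_i$ share a sign; as soon as $\Delta$ mixes positive and negative values, routing everything through the negative endpoint wastes length. The fix is to move the positive and negative parts through independent nested chains, so their costs add up to exactly the span.

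Concretely, I would let $P$ be the indices with $\delta_i > 0$ and $N$ the indices with $\delta_i < 0$ (indices with $\delta_i = 0$ need no group and contribute nothing). Sort $P$ as $i_1, \dots, i_p$ with $\delta_{i_1} \le \cdots \le \delta_{i_p}$ and form the nested groups $G_k = \{i_k, i_{k+1}, \dots, i_p\}$, setting $\tau(G_1) = \delta_{i_1}$ and $\tau(G_k) = \delta_{i_k} - \delta_{i_{k-1}}$ for $k \ge 2$. Symmetrically, sort $N$ from the value closest to $0$ down to the most negative and build an analogous descending chain $H_1 \supset H_2 \supset \cdots$ with nonpositive increments. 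If one insists that $\Fam$ be a single rooted tree, the full set $[n]$ can be prepended as a root with $\tau([n]) = 0$, which adds no length.

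Next I would verify the two required properties and compute the cost. \emph{Validity:} the groups containing any fixed index form a prefix of one chain, so the telescoping sum of its increments equals precisely that index's $\delta_i$. \emph{Laminarity:} the $G_k$ are pairwise nested, the $H_k$ are pairwise nested, and $G_1 \cap H_1 = \emptyset$, so every cross pair is disjoint and $\Fam$ is a hierarchy. \emph{Cost:} the positive increments are nonnegative and telescope to $\max(\max_i \delta_i, 0)$, while the negative increments telescope in absolute value to $\max(-\min_i \delta_i, 0)$; summing gives $\max(\max_i \delta_i, 0) - \min(\min_i \delta_i, 0) = \spanop(\Delta \cup \{0\})$.

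I expect no serious obstacle: once the construction is fixed, validity is a one-line telescoping identity, laminarity follows immediately from the sign split, and the total length is a second telescoping computation. The only real content is recognizing that a single caterpillar starting from $\min_i \delta_i$ wastes length $\bigl|\min(\min_i \delta_i,\, 0)\bigr|$ in the mixed case, and that handling the positive and negative parts as two disjoint nested chains removes this waste exactly. The degenerate cases (all values of one sign, or values equal to zero) are absorbed automatically by writing the two chain costs as $\max(\max_i \delta_i, 0)$ and $\max(-\min_i \delta_i, 0)$.
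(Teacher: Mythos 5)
Your proposal is correct and follows essentially the same route as the paper: split $\Delta$ by sign, build a nested telescoping chain for each sign class, and observe that the two chains cost $\max(\Delta\cup\{0\})$ and $-\min(\Delta\cup\{0\})$ respectively, summing to the span. The only cosmetic difference is that you explicitly note the optional root $[n]$ with zero translation and spell out the degenerate cases, which the paper leaves implicit.
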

\begin{proof}
To prove this result, we simply construct a solution $(\Fam, \tau)$ with the required cost. We first split $\Delta$ into $\Delta^- = \{\delta \in \Delta\mid \delta < 0\}$ and $\Delta^+ = \{\delta \in \Delta\mid \delta > 0\}$. We first sort the values of $\Delta^+$ to obtain the list $\delta_{\sigma(1)}, \hdots, \delta_{\sigma(k)}$, where $\sigma$ is some permutation of $[n]$. We then create the index subsets $S^+_i = \{\sigma(i), \hdots, \sigma(k)\}$ for $i \in [k]$, and add them to $\Fam$. We set $\tau(S^+_1) = \delta_{\sigma(1)}$ and $\tau(S^+_i) = \delta_{\sigma(i)} - \delta_{\sigma(i-1)}$ for $i > 1$. This directly creates a valid transformation for all displacements in $\Delta^+$. Furthermore, $\sum_i |\tau(S^+_i)| = \delta_{\sigma(k)} = \max(\Delta \cup \{0\})$. We can now do the same for the displacements in $\Delta^-$, but instead order them in decreasing order. This results in index subsets $S^-_i$ where $\sum_i |\tau(S^-_i)| = -\min(\Delta \cup \{0\})$. The resulting family $\Fam$ is a hierarchy by construction, and the total length of $(\Fam, \tau)$ is $\sum_i |\tau(S^+_i)| + \sum_i |\tau(S^-_i)| = \max(\Delta \cup \{0\}) - \min(\Delta \cup \{0\}) = \spanop(\Delta \cup \{0\})$.
\end{proof}

We can combine these two bounds into an exact algorithm for both MLFT and MLHT in~1D.


\begin{restatable}
{theorem}{MLT}\label{thm:MLT}
We can compute an MLFT and an MLHT of a collection $\Delta$ of $n$ real numbers in $O(n \log n)$ time.   
\end{restatable}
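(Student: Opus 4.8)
The plan is to combine the two preceding lemmas with one simple observation: every hierarchical family is in particular an unrestricted (free) family, so the solutions admissible for MLHT form a subset of those admissible for MLFT. Minimizing over the larger set can only give a smaller value, hence the cost of an MLFT of $\Delta$ is at most the cost of an MLHT of $\Delta$. Chaining this with Lemma~\ref{lem:MFLT1D} (the MLFT is at least $\spanop(\Delta \cup \{0\})$) and Lemma~\ref{lem:MTLT1D} (the MLHT is at most $\spanop(\Delta \cup \{0\})$) yields
\[
\spanop(\Delta \cup \{0\}) \;\le\; \mathrm{MLFT}(\Delta) \;\le\; \mathrm{MLHT}(\Delta) \;\le\; \spanop(\Delta \cup \{0\}),
\]
so all three quantities coincide and both optima equal $\spanop(\Delta \cup \{0\})$. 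This settles the optimal value immediately; what remains is to produce a witnessing solution within the claimed time bound.

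For the construction I would reuse the hierarchical solution built in the proof of Lemma~\ref{lem:MTLT1D}: split $\Delta$ into its positive and negative parts, sort each part, and form the two nested chains of groups $S^+_i$ and $S^-_i$ with the stated translation values. Because this solution is hierarchical it is simultaneously a valid free solution, and its total length is exactly $\spanop(\Delta \cup \{0\})$, which by the sandwich above is optimal for both MLFT and MLHT. Thus a single construction witnesses the optimum for both problems, and points with $\delta_i = 0$ are handled trivially since they need not lie in any translated group.

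The running time is dominated by the two sorts, giving $O(n \log n)$; computing $\spanop(\Delta \cup \{0\})$ and partitioning $\Delta$ into positive and negative values both take $O(n)$ time. The one point that genuinely needs care is the output representation: the groups satisfy $S^+_1 \supseteq S^+_2 \supseteq \cdots$ and are therefore nested, so writing them out as explicit subsets would cost $\Theta(n^2)$. I would instead emit the family in its rooted-tree form, as introduced in Section~\ref{sec:framework}, which stores all nested groups together with $\tau$ in $O(n)$ space. I do not expect a real obstacle here — the substance of the proof is the sandwiching argument, and the only thing to watch is that the nested family must be represented implicitly via the tree to avoid a quadratic blowup, after which the whole algorithm clearly runs in $O(n \log n)$ time.
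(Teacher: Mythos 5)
Your proposal is correct and follows essentially the same route as the paper: sandwiching both optima between the lower bound of Lemma~\ref{lem:MFLT1D} and the upper bound of Lemma~\ref{lem:MTLT1D} via the observation that hierarchical solutions are also free solutions, then reusing the sorted-chain construction and representing the nested family as a tree to keep the output at $O(n)$ size. No gaps.
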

\begin{proof}
As the MLHT problem imposes stricter requirements on the solution than the MLFT problem, the lower bound of Lemma~\ref{lem:MFLT1D} also applies to the MLHT problem. Thus, by Lemma~\ref{lem:MTLT1D}, an MLHT of $\Delta$ has total length $\spanop(\Delta \cup \{0\})$. Following the construction in Lemma~\ref{lem:MTLT1D}, we can compute an MLHT of $\Delta$ by splitting and sorting the set (twice), which can be done in $O(n \log n)$ time. The resulting family of subsets $\Fam$ may have total size $O(n^2)$, but it can be represented by a tree $T$ with only $O(n)$ nodes. As any solution to the MLHT problem is also a solution to the MLFT problem, the same result applies to that problem.     
\end{proof}

In the 2D version of the problem, the input is specified by a collection $\Delta$ of vectors in $\Reals^2$. We first consider the hierarchical version of the problem, namely MLHT. This problem is essentially the same as the \emph{Euclidean Steiner Tree} (EST) problem, in which input consists of a set of points $P$, and the goal is to compute the shortest tree (measured via Euclidean distance) connecting all points in $P$, where it is allowed to use points not in $P$ as internal nodes for the tree. It is well known that the EST problem is NP-hard in 2D~\cite{SteinerTreeNPhard}. We can show that a MLHT of a collection $\Delta$ of vectors in $\Reals^2$ directly corresponds to an EST of the point set $P = \Delta \cup \{(0 ,0)\}$. This gives us the following result.  



\begin{restatable}{theorem}{MLHThard}\label{thm:MTLT-NPhard}
The MLHT problem is NP-hard in $\Reals^2$.
\end{restatable}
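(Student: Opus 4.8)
The plan is to establish a length-preserving correspondence between hierarchical solutions for $\Delta$ and Euclidean Steiner trees on the terminal set $P = \Delta \cup \{(0,0)\}$, and then to reduce from the Euclidean Steiner Tree (EST) problem, which is NP-hard in $\Reals^2$~\cite{SteinerTreeNPhard}.

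First I would set up the geometric reading of a hierarchical solution. Given a valid $(\Fam,\tau)$ with tree representation $T$, assign to each node $v$ the position $p(v)$ equal to the sum of $\tau(S)$ over all sets $S$ on the path from $v$ up to the root of $T$; validity then forces $p(i) = \delta_i$ at every leaf $i$. Consequently $\|\tau(S(v))\| = \|p(v) - p(\mathrm{parent}(v))\|$ for every non-root node $v$, while the root $r$ contributes $\|\tau(S(r))\| = \|p(r) - (0,0)\|$. Treating the origin as a virtual parent of the root(s) therefore rewrites the total cost $\sum_{S \in \Fam} \|\tau(S)\|$ as the total Euclidean length of a tree spanning $\{(0,0)\} \cup \{\delta_1, \dots, \delta_n\}$ whose remaining internal vertices are unconstrained Steiner points. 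Hence every hierarchical solution yields a Steiner tree on $P$ of equal length, so the optimal MLHT length of $\Delta$ is at least the EST length of $P$.

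For the converse I would start from an optimal Steiner tree on $P$, root it at the origin terminal, and read off $\tau$ along the edges by the same position bookkeeping. The one technical point is that the hierarchy requires the indices to sit at the leaves, whereas a terminal $\delta_i$ may be an interior vertex of the Steiner tree (and the origin may have degree larger than one, which is exactly why the virtual super-root above the hierarchy is allowed to branch; equivalently one may place the group $[n]$ at the origin with zero translation at no cost). I would resolve this by hanging, at every vertex located at some $\delta_i$, an extra leaf for index $i$ joined by a zero-length edge: this leaves the total length unchanged, makes all indices leaves, and turns every interior vertex into a group, with the nesting of subtrees guaranteeing that $\Fam$ is a genuine hierarchy. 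Collapsing any degree-two internal node (which never increases length, by the triangle inequality) keeps the family clean. This shows the optimal MLHT length equals the EST length of $P$.

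Finally, the reduction is immediate: given an EST instance on a point set $P'$, translate it so that one of its points lies at the origin, declare that point the origin of our construction, and let $\Delta$ be the remaining (translated) points, so that $P = \Delta \cup \{(0,0)\}$ is a translate of $P'$. Since EST is translation invariant, the optimal MLHT length for $\Delta$ equals the EST length of $P'$, so the decision versions coincide for every threshold $L$, and the transformation is clearly polynomial. As EST is NP-hard in the plane, so is MLHT. The main obstacle I anticipate is not the reduction but this equivalence of optima, specifically the bookkeeping that reconciles the ``indices are leaves'' requirement of the hierarchy with interior terminals and higher-degree origin vertices of a Steiner tree; the zero-length-edge device is what makes the two formulations match exactly. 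Note that I only need NP-hardness and not NP-membership, so the usual sum-of-square-roots subtlety for EST does not arise.
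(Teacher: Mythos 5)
Your proposal is correct and follows essentially the same route as the paper: a reduction from the Euclidean Steiner Tree problem in which one terminal is translated to the origin, with the length-preserving correspondence between hierarchical solutions (read as rooted trees with positions obtained by summing translations along root-to-node paths) and Steiner trees on $\Delta \cup \{(0,0)\}$ established in both directions. The technical points you flag --- terminals appearing as interior Steiner-tree vertices and the root group $[n]$ with zero translation --- are handled in the paper by assigning to each tree vertex the group of indices in its subtree, which is equivalent to your zero-length-edge device.
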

\begin{proof}
    We use a reduction from the Euclidean Steiner Tree problem. Let $P$ be a set of $n$ points in $\Reals^2$. Choose any point $p^* \in P$ and let $\Delta$ be obtained by translating $P$ such that $p^*$ is at the origin, that is, $\Delta = \{p - p^*\mid p \in P\setminus\{p^*\}\}$. We now show that there exists an EST $Y$ on $P$ of length at most $L$ if and only if the MLHT of $\Delta$ has length at most $L$. 

\begin{wrapfigure}[13]{r}{0.30\textwidth}
    \centering
    \vspace{-0.5\baselineskip}
\includegraphics[width=0.29\textwidth]{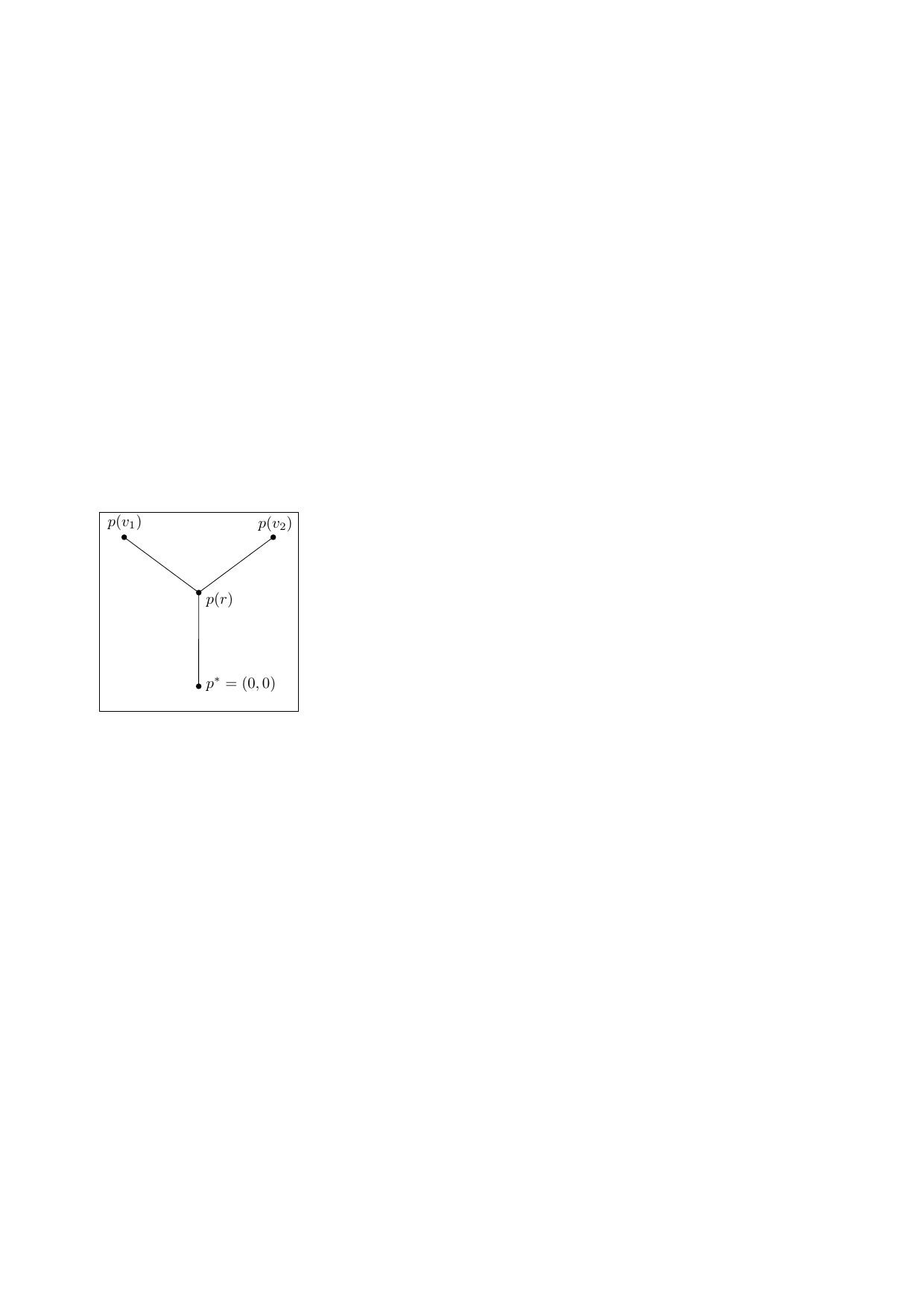}
    \caption{An EST of a tree with three nodes.}
    \label{fig:Steiner}
\end{wrapfigure}

Let $(\Fam, \tau)$ be a an MLHT of $\Delta$ of length at most $L$. Let $T$ be the tree that represents $\Fam$, and let $S(v) \in \Fam$ be the group that corresponds to a node $v \in T$. We assume that the root $r$ of $T$ always corresponds to $S(r) = [n]$, as otherwise we can always add it with $\tau(S(r)) = (0, 0)$ without changing the cost. We now construct a Steiner tree $Y$ from $(\Fam, \tau)$ as follows. For every internal node $v \in T$ we introduce a point $p(v) \in \Reals^2$. We set $p(r) = \tau(S(r))$. For all other internal nodes $v \in T$, we set $p(v) = p(u) + \tau(S(v))$ in a top-down fashion, where $u$ is the parent of $v$ in $T$. The Steiner Tree $Y$ now consists of the segment from $(0, 0)$ to $p(r)$, plus all segments from $p(u)$ to $p(v)$, where $v \neq r$ is an internal node in $T$ and $u$ is its parent (see Fig.~\ref{fig:Steiner}). By construction, $Y$ is clearly connected and includes the point $(0, 0)$. Now consider a vector $\delta_i \in \Delta$. Since $(\Fam, \tau)$ is a valid solution, we get that $\sum_{S \in \Fam_i} \tau(S) = \delta_i$. Let $S_1, \hdots, S_k$ be the sets in $\Fam_i$ ordered decreasingly on cardinality. Since $\Fam$ is hierarchical, we must have that $S_{j+1} \subset S_j$ for all $1 \leq j < k$. Furthermore, if $S_{j+1}$ corresponds to a node $v \in T$, then $i \in S(u)$ for the parent $u$ of $v$, and hence $S(u) = S_j$. Finally, by assumption we have that $S_1 = S(r)$. Thus, if $v \in T$ is the parent of a leaf representing a singleton group $\{i\}$, then we get that $p(v) = \sum_{u \in \anc(v)} \tau(S(u)) = \delta_i$, where $\anc(v)$ is the set of ancestors of $v$ in $T$. Hence $Y$ is a valid Steiner Tree for $\Delta \cup \{(0,0)\}$. Furthermore, it is easy to see that the total length of $(\Fam, \tau)$ and $Y$ is exactly the same. By shifting $Y$ by the vector $p^*$, we obtain an EST on $P$ with length at most $L$.

Now assume that $Y$ is an EST on $P$ of length at most $L$, and let $Y'$ be obtained by shifting $Y$ by $-p^*$, such that $Y'$ is a Steiner Tree on $\Delta \cup \{(0, 0)\}$. We may assume that $Y'$ is a tree, for otherwise we can remove parts of $Y'$ to make it a tree whilst keeping its validity and not increasing the length of $Y'$. We now construct a solution $(\Fam, \tau)$ as follows. We first root this tree in the point $(0, 0)$. For all points $p$ that are either Steiner points of $Y'$ or vectors in $\Delta$, add a set $S$ to $\Fam$ corresponding to the points in $\Delta$ that are in the subtree of $p$ in $Y'$ (if $S = \emptyset$, then we can ignore this Steiner point, as it is not needed). Set $\tau(S)$ to the difference between $p$ and its parent in $Y'$ (Steiner point of $Y'$ or vector in $\Delta \cup \{(0, 0)\}$). It is now easy to see that $(\Fam, \tau)$ is indeed a valid solution for the MLHT problem on $\Delta$. As every segment of $Y'$ is represented by at most one $\tau(S)$ for some $S \in \Fam$, the solution $(\Fam, \tau)$ also has length at most $L$.  
\end{proof}

We next turn our attention to the Free variant of the problem. For this problem variant we do not know whether the problem is NP-hard, or if it can be solved in polynomial time. Instead, we show a very simple and efficient approximation algorithm for the problem. 

We first consider a version of the MLFT problem where, instead of using the Euclidean distance to measure the length of a single translation, we use the Manhattan distance. That is, the length of a translation by $(x, y)$ is $|x| + |y|$. We refer to this problem using the Manhattan distance as the \emph{Manhattan MLFT} problem. 
Now let $(\Fam, \tau)$ be the optimal solution to the Manhattan MLFT problem for some collection of vectors $\Delta$ in $\Reals^2$. Let $S \in \Fam$ with $\tau(S) = (x, y)$. We can replace $S$ by two copies $S'$ and $S''$ of $S$, and set $\tau(S') = (x, 0)$ and $\tau(S'') = (0, y)$. Since the length of a translation is measured by the Manhattan distance, the resulting solution has exactly the same cost, and is hence still optimal. This implies that the Manhattan MLFT problem can be optimally solved by first solving the problem optimally on $x$-coordinates, and then solving the problem optimally on $y$-coordinates independently. By applying Theorem~\ref{thm:MLT} twice, we obtain the following result.

\begin{lemma}\label{lem:Manhattan}
The Manhattan MLFT of a collection $\Delta$ of $n$ vectors in $\Reals^2$ can be computed in $O(n \log n)$ time.   
\end{lemma}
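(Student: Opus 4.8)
The plan is to exploit the fact that the Manhattan norm is \emph{separable}, $\|(x,y)\|_1 = |x| + |y|$, to decouple the Manhattan MLFT completely into two independent one-dimensional MLFT instances. Writing each input vector as $\delta_i = (\delta_i^x, \delta_i^y)$, I would introduce the coordinate collections $\Delta_x = \{\delta_1^x, \hdots, \delta_n^x\}$ and $\Delta_y = \{\delta_1^y, \hdots, \delta_n^y\}$, and denote by $\mathrm{MLFT}(\cdot)$ the optimal (1D) length of the respective instance. The target is to show that the optimal Manhattan MLFT cost equals $\mathrm{MLFT}(\Delta_x) + \mathrm{MLFT}(\Delta_y)$, after which Theorem~\ref{thm:MLT} finishes the job.

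First I would prove the lower bound by \emph{projection}. Given any valid solution $(\Fam, \tau)$ for the 2D Manhattan problem, write $\tau(S) = (x_S, y_S)$. The validity condition $\sum_{S \in \Fam_i} \tau(S) = \delta_i$ splits coordinatewise into $\sum_{S \in \Fam_i} x_S = \delta_i^x$ and $\sum_{S \in \Fam_i} y_S = \delta_i^y$, so $(\Fam, \tau_x)$ with $\tau_x(S) = x_S$ is a valid 1D solution for $\Delta_x$, and likewise $(\Fam, \tau_y)$ is valid for $\Delta_y$. Since the Manhattan cost of $(\Fam, \tau)$ is $\sum_{S} (|x_S| + |y_S|) = \sum_S |x_S| + \sum_S |y_S|$, it is at least $\mathrm{MLFT}(\Delta_x) + \mathrm{MLFT}(\Delta_y)$.

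For the matching upper bound I would use exactly the construction sketched just before the lemma. Take optimal 1D solutions $(\Fam^x, \tau^x)$ for $\Delta_x$ and $(\Fam^y, \tau^y)$ for $\Delta_y$, and combine them into a single 2D solution by keeping the two families on disjoint copies of their index sets: each $S \in \Fam^x$ receives the purely horizontal vector $(\tau^x(S), 0)$, and each $S \in \Fam^y$ receives the purely vertical vector $(0, \tau^y(S))$. Because a horizontal group contributes $0$ to every $y$-constraint and a vertical group contributes $0$ to every $x$-constraint, the two families do not interfere, the combined solution is valid, and its Manhattan cost is precisely $\sum_S |\tau^x(S)| + \sum_S |\tau^y(S)| = \mathrm{MLFT}(\Delta_x) + \mathrm{MLFT}(\Delta_y)$.

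Combining the two bounds gives equality, so an optimal Manhattan MLFT is obtained by solving the two 1D instances independently and merging the results. By Theorem~\ref{thm:MLT} each 1D instance takes $O(n \log n)$ time, and merging the two (tree-represented) families costs $O(n)$, for a total of $O(n \log n)$. I do not anticipate a genuine obstacle: the only subtlety is the bookkeeping that keeps the $x$- and $y$-families on disjoint index copies so that their validity constraints remain independent, and this is exactly the ``two copies $S'$ and $S''$'' device already introduced in the text.
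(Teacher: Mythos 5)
Your proposal is correct and follows essentially the same route as the paper: both exploit the separability of the Manhattan norm to decouple the problem into independent $x$- and $y$-instances solved via Theorem~\ref{thm:MLT}, with your explicit projection lower bound and combination upper bound just being a more spelled-out version of the paper's ``split each group into a horizontal and a vertical copy at no extra cost'' argument.
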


Observe that the Manhattan MLFT is also a valid solution for the (Euclidean) MLFT problem. In fact, it can be shown that the Manhattan MLFT is a $\sqrt{2}$-approximation. Instead of proving this result, we present a slight improvement below. Note that we could also use the Manhattan distance for some rotated set of axes. Specifically, let the $\beta$-Manhattan MLFT for some collection of vectors $\Delta$ be the transformation obtained by first rotating $\Delta$ clockwise around the origin by angle $\beta$, then computing the Manhattan MLFT $(\Fam, \tau)$ on the resulting vectors $\Delta'$, and then rotating all individual translations in $\tau$ counterclockwise by angle $\beta$. Observe that the resulting solution $(\Fam, \tau')$ is again a valid solution for the (Euclidean) MLFT problem.

\begin{restatable}{theorem}{MLFTapprox}
\label{thm:approx}
Either the ($0$-)Manhattan MLFT or the $\frac{\pi}{4}$-Manhattan MLFT of some collection $\Delta$ of vectors in $\Reals^2$ is a $c$-approximation of the (Euclidean) MLFT of $\Delta$ with $c = \sin(\frac{\pi}{8}) + \cos(\frac{\pi}{8}) \approx 1.307$.  
\end{restatable}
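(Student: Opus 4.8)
The plan is to bound the Euclidean cost of each tilted Manhattan solution against the true (Euclidean) optimum $(\Fam^*, \tau^*)$ by exploiting the optimality of the Manhattan MLFT in its rotated frame, and then to show that simply averaging the two bounds for $\beta = 0$ and $\beta = \frac{\pi}{4}$ already yields the factor $c$. First I would fix notation: for a vector $v \in \Reals^2$ at angle $\theta$ with Euclidean length $\rho$, write $m_\beta(v) = \rho\,(|\cos(\theta-\beta)| + |\sin(\theta-\beta)|)$ for its Manhattan length after a clockwise rotation by $\beta$. Two facts drive the argument: (i) $\|v\| \le m_\beta(v)$ for every $v$ and $\beta$, since rotation preserves Euclidean length while the Euclidean norm is bounded by the Manhattan norm; and (ii) the $\beta$-Manhattan MLFT $(\Fam_\beta,\tau_\beta)$ minimizes $\sum_{S} m_\beta(\tau(S))$ over all valid solutions, because the Manhattan MLFT is optimal for the rotated input by the construction preceding Lemma~\ref{lem:Manhattan}.

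Chaining these facts gives the crucial reduction. The Euclidean cost of the $\beta$-Manhattan MLFT satisfies
\[
\sum_{S} \|\tau_\beta(S)\| \;\le\; \sum_{S} m_\beta(\tau_\beta(S)) \;\le\; \sum_{S} m_\beta(\tau^*(S)),
\]
where the last inequality uses that the Euclidean MLFT $(\Fam^*,\tau^*)$ is itself a valid solution. The point is that this replaces the \emph{computed} solution by the (unknown) optimum, now evaluated under the tilted Manhattan norm, so the rest of the analysis no longer depends on the algorithm at all.

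Next I would combine the bounds for $\beta = 0$ and $\beta = \frac{\pi}{4}$. Writing $w_S = \|\tau^*(S)\|$, letting $\theta_S$ be the angle of $\tau^*(S)$, and using $\min(A,B) \le \tfrac12(A+B)$, the cost of the better of the two computed solutions is at most
\[
\sum_{S} w_S \cdot \tfrac12\bigl(f_0(\theta_S) + f_{\pi/4}(\theta_S)\bigr), \qquad f_\beta(\theta) = |\cos(\theta-\beta)| + |\sin(\theta-\beta)|.
\]
Hence it suffices to establish the pointwise inequality $\tfrac12\bigl(f_0(\theta) + f_{\pi/4}(\theta)\bigr) \le c$ for all $\theta$, which decouples the bound from the arbitrary family $\Fam^*$ entirely and reduces everything to a one-variable maximization.

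The hard part will be this trigonometric maximization. Since $f_0$ has period $\frac{\pi}{2}$, so does $g(\theta) := f_0(\theta) + f_{\pi/4}(\theta)$, and I would analyze a single period. On $[0,\frac{\pi}{4}]$ the absolute values resolve to $g(\theta) = (\cos\theta + \sin\theta) + \sqrt2\cos\theta = (1+\sqrt2)\cos\theta + \sin\theta$, a sinusoid of amplitude $\sqrt{(1+\sqrt2)^2 + 1} = \sqrt{4 + 2\sqrt2}$, attained at $\tan\theta = \sqrt2 - 1$, i.e. $\theta = \frac{\pi}{8}$; on $[\frac{\pi}{4},\frac{\pi}{2}]$ a symmetric computation gives the same maximum at $\theta = \frac{3\pi}{8}$. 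Thus $\max_\theta g(\theta) = \sqrt{4+2\sqrt2} = 2c$, because $(\sin\frac{\pi}{8} + \cos\frac{\pi}{8})^2 = 1 + \sin\frac{\pi}{4} = 1 + \tfrac{\sqrt2}{2} = \tfrac14(4 + 2\sqrt2)$. Dividing by $2$ yields $\tfrac12 g(\theta) \le c$, completing the argument. The only subtlety to watch is that the two tilt angles must be exactly $\frac{\pi}{4}$ apart, so that the worst direction for one axis system ($\theta = \frac{\pi}{4} + k\frac{\pi}{2}$, where $f$ equals $\sqrt2$) is the best for the other (where $f$ equals $1$); this interleaving is precisely what forces the average to peak at $\frac{\pi}{8}$ with value $c \approx 1.307$ rather than at $\frac{\pi}{4}$ with value $\sqrt2$.
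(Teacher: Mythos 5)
Your proof is correct and reaches the same constant, but the way you extract $c$ differs from the paper in a way worth noting. Both arguments share the same skeleton: evaluate the (unknown) Euclidean optimum $(\Fam^*,\tau^*)$ under the two tilted Manhattan norms, use that the computed $\beta$-Manhattan MLFT minimizes $\sum_S m_\beta(\tau(S))$ over all valid solutions (Lemma~\ref{lem:Manhattan} and the rotation argument) and that Euclidean length is dominated by any tilted Manhattan length, and conclude that the better of the two computed solutions is within the claimed factor of $L$. The paper then observes that the weighted angle sums $\sum_i \alpha_i\|\tau(S_i)\|$ and $\sum_i \alpha'_i\|\tau(S_i)\|$ add up to $\frac{\pi}{4}L$, so one is at most $\frac{\pi}{8}L$, and invokes strict concavity of $\sin\alpha+\cos\alpha$ with an exchange argument to show the worst case has all angles equal to $\frac{\pi}{8}$. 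You instead average the two cost functionals directly via $\min(A,B)\le\tfrac12(A+B)$ and reduce to the pointwise inequality $\tfrac12\bigl(f_0(\theta)+f_{\pi/4}(\theta)\bigr)\le c$, which is a clean single-variable maximization: $g(\theta)=(1+\sqrt2)\cos\theta+\sin\theta$ on $[0,\frac{\pi}{4}]$ peaks at $\theta=\frac{\pi}{8}$ with value $\sqrt{4+2\sqrt2}=2c$, and the check $(\sin\frac{\pi}{8}+\cos\frac{\pi}{8})^2=1+\sin\frac{\pi}{4}$ confirms the constant. Your route is somewhat more elementary: it avoids the exchange argument (and its implicit appeal to the existence of a worst-case configuration), and it makes fully explicit the chain ``Euclidean cost of the computed solution $\le$ its tilted Manhattan cost $\le$ tilted Manhattan cost of the Euclidean optimum,'' which the paper compresses into the final sentence. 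The two analyses agree because at $\theta=\frac{\pi}{8}$ one has $f_0(\theta)=f_{\pi/4}(\theta)=\sin\frac{\pi}{8}+\cos\frac{\pi}{8}$, i.e., the maximizer of the average is exactly the equalizing configuration the paper's concavity argument identifies.
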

\begin{proof}
Let $(\Fam, \tau)$ be the (Euclidean) MLFT of $\Delta$ and let $L$ be its total length. Let $S_1, \hdots, S_k$ be the groups in $\Fam$. For some group $S_i$ let $\alpha_i$ ($0 \leq \alpha_i \leq \frac{\pi}{4}$) be the smallest angle between the vector $\tau(S_i)$ and either the $x$- or $y$-axis. If we denote the Euclidean norm using $\|\cdot\|$ and the Manhattan norm using $\|\cdot\|_1$, then, using basic trigonometry, we obtain that $\|\tau(S_i)\|_1 = (\sin(\alpha_i) + \cos(\alpha_i)) \|\tau(S_i)\|$. Now let $\alpha'_i$ ($0 \leq \alpha'_i \leq \frac{\pi}{4}$) be the smallest angle between the vector $\tau(S_i)$ and either of the rotated axes of the $\frac{\pi}{4}$-Manhattan MLFT. Observe that $\alpha_i + \alpha'_i = \frac{\pi}{4}$. Consider the following sum:
\begin{align*}
\sum_{i=1}^k \alpha_i \|\tau(S_i)\| + \sum_{i=1}^k \alpha'_i \|\tau(S_i)\| &= \\
\sum_{i=1}^k (\alpha_i + \alpha'_i) \|\tau(S_i)\| &= \frac{\pi}{4} L
\end{align*}
Thus either $\sum_{i=1}^k \alpha_i \|\tau(S_i)\|$ or $\sum_{i=1}^k \alpha'_i \|\tau(S_i)\|$ is at most $\frac{\pi}{8} L$. Without loss of generality, assume that this holds for the first sum. We now consider the sum $\sum_{i=1}^k (\sin(\alpha_i) + \cos(\alpha_i)) \|\tau(S_i)\|$, which corresponds to the length of $(\Fam, \tau)$ in Manhattan distance. Note that the function $\sin(\alpha) + \cos(\alpha)$ is strictly concave for $0 \leq \alpha \leq \frac{\pi}{4}$. We claim that, in the worst case, all angles $\alpha_i$ should be equal, assuming that the sum $\sum_{i=1}^k \alpha_i \|\tau(S_i)\|$ is fixed. For the sake of contradiction, assume that $\alpha_i < \alpha_j$ in the worst case for some $i$ and $j$. Then we can replace both $\alpha_i$ and $\alpha_j$ by $\alpha = \left(\frac{\|\tau(S_i)\|}{\|\tau(S_i)\| + \|\tau(S_j)\|}\right) \alpha_i + \left(\frac{\|\tau(S_j)\|}{\|\tau(S_i)\| + \|\tau(S_j)\|}\right) \alpha_j$. It is easy to verify that the sum $\sum_{i=1}^k \alpha_i \|\tau(S_i)\|$ remains the same and that $\alpha_i < \alpha < \alpha_j$. However, since the function $\sin(\alpha) + \cos(\alpha)$ is strictly concave, the sum $\sum_{i=1}^k (\sin(\alpha_i) + \cos(\alpha_i)) \|\tau(S_i)\|$ will increase, which contradicts that $\alpha_i < \alpha_j$ in the worst case.

Thus, we obtain that $\alpha_i = \frac{\pi}{8}$ for all $1 \leq i \leq k$ in the worst case. Hence, the length of $(\Fam, \tau)$ in Manhattan distance is at most $\sum_{i=1}^k (\sin(\frac{\pi}{8}) + \cos(\frac{\pi}{8})) \|\tau(S_i)\| = (\sin(\frac{\pi}{8}) + \cos(\frac{\pi}{8})) L$. Since the Manhattan MLFT is at least as good as $(\Fam, \tau)$, we can conclude that the Manhattan MLFT is a $c$-approximation of the (Euclidean) MLFT with $c = \sin(\frac{\pi}{8}) + \cos(\frac{\pi}{8})$.
\end{proof}
\begin{figure}[h]
    \centering
\includegraphics{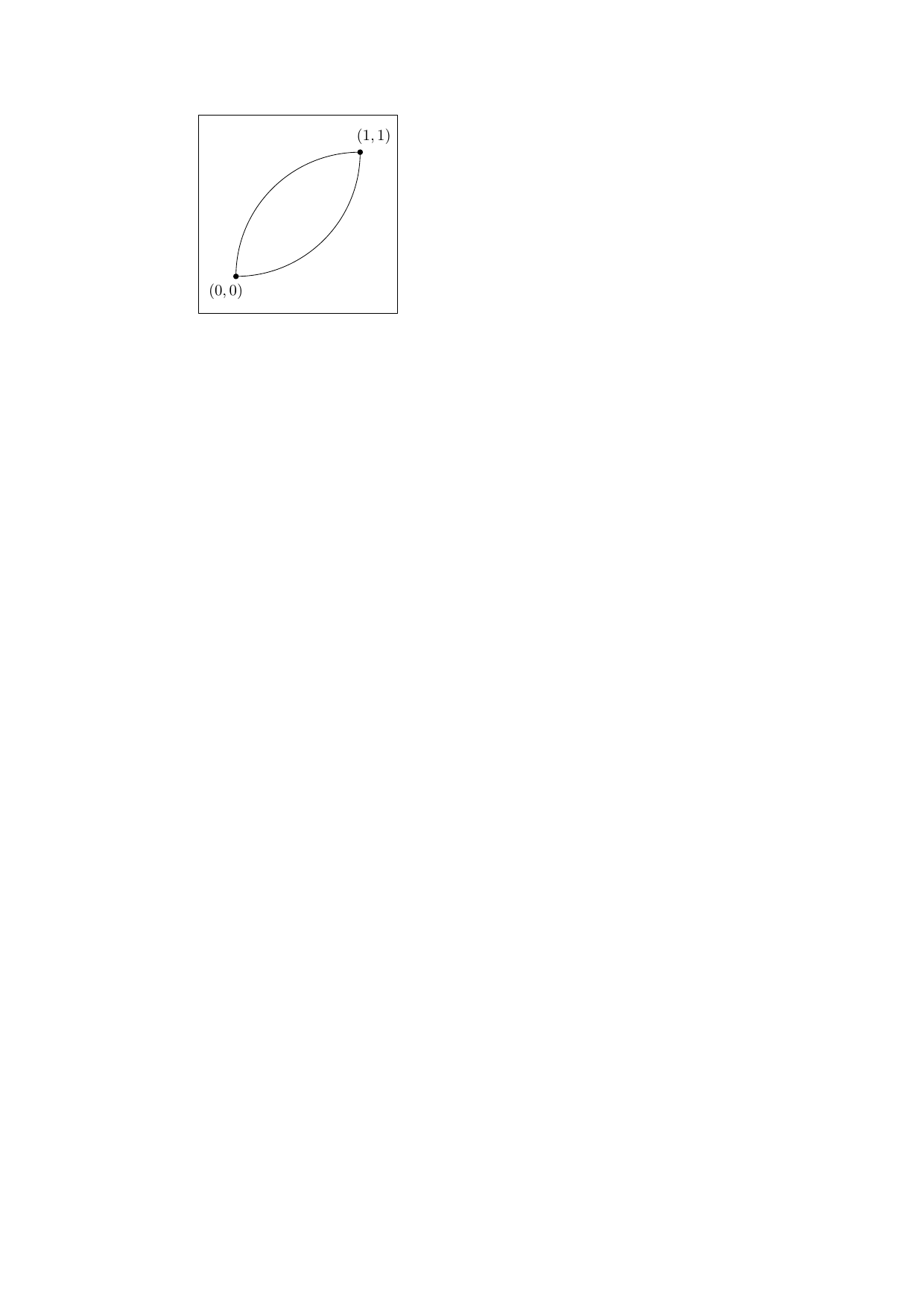}
    \caption{Two quarter arcs of the unit circle with a rotational symmetry around the point~(0.5,~0.5).}
    \label{fig:half_arcs}
\end{figure}
Based on Theorem~\ref{thm:approx}, the following natural question arises: can the approximation factor be improved by considering the $\beta$-Manhattan MLFT for even more angles? Although this may indeed be the case, we sketch that there is a limit to this approach using a simple example. Consider the collection of vectors/points $\Delta$ obtained by densely sampling points along two quarter arcs of the unit circle, as depicted in Figure~\ref{fig:half_arcs}. For this example it is important that the points on both arcs are sampled such that there is a rotational symmetry around the point $(0.5, 0.5)$. It is easy to verify that, regardless of the angle $\beta$, the length of the $\beta$-Manhattan MLFT approaches $2$ when the arcs are sampled densely enough. However, we can construct a (Euclidean) MLFT $(\Fam, \tau)$ by choosing appropriate groups $S_i$ along with translations $\tau(S_i)$ corresponding to the difference vectors between all consecutive points along the bottom arc. This allows us to also construct the vectors on the top arc, if we choose the groups $S_i$ carefully. The total length of $(\Fam, \tau)$ approaches $\frac{\pi}{2}$ when the arcs are sampled densely enough. Thus, even if we consider the $\beta$-Manhattan MLFT for all angles $\beta$, we can never achieve an approximation ratio better than $\frac{4}{\pi} \approx 1.273$.

Finally, we want to observe that the Manhattan distance does not always make our base problem easier. If we consider the hierarchical version of the problem, then we can simply follow the same reduction as in Theorem~\ref{thm:MTLT-NPhard}, but instead reducing from the Rectilinear Steiner Tree Problem. Since this problem is also NP-hard~\cite{RectSteinerTreeNPhard}, it is also NP-hard to compute the Manhattan MLHT in~$\Reals^2$.

\section{Optimizing Cardinality} \label{sec:cardinality}

In this section we consider the variants of the problem that minimize the number of groups in $\Fam$. Before we discuss the more challenging Free variants, we first briefly discuss the hierarchical variant. 

Let $(\Fam, \tau)$ be the MCHT of some collection of vectors $\Delta$ in $\Reals^d$. Assume that there are two groups $S$ and $S'$ in $\Fam$ such that $S \subset S'$, and there is no $S'' \in \Fam$ such that $S \subset S'' \subset S'$. We can then replace $\tau(S)$ by $\tau(S) + \tau(S')$ and replace $S'$ by $S'\setminus S$. If $S'\setminus S$ was already in $\Fam$, then we can simply add $\tau(S')$ to $\tau(S'\setminus S)$ (if $S'\setminus S = \emptyset$, then the group can simply be removed). Note that this operation does not invalidate the solution and also does not increase the number of groups in $\Fam$. Furthermore, it breaks the subset relation between $S$ and $S'$ and does not introduce any new subset relations between groups. Thus, we can repeat this operation until all groups are disjoint. Finally observe that a disjoint family of groups is also hierarchical. The following result now follows from Theorem~\ref{thm:disjoint}.

\begin{theorem}\label{thm:treecount}
We can compute the MCHT for a collection $\Delta$ of $n$ vectors in $\Reals^d$ in $O(d n \log n)$ time.
\end{theorem}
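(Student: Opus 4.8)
The plan is to show that, for the Cardinality measure, insisting on a hierarchical family is no more permissive than insisting on a disjoint one: the minimum number of groups in a hierarchical solution equals the minimum in a disjoint solution. Once this equality of optima is in hand, an MCHT is obtained simply by computing an MCDT, which Theorem~\ref{thm:disjoint} already does in $O(dn\log n)$ time.

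First I would dispose of the easy direction. Every disjoint family is trivially hierarchical, since a disjoint pair $S,S'$ satisfies the hierarchy condition via $S\cap S'=\emptyset$. Hence every disjoint solution is also a hierarchical solution, so the hierarchical feasible set contains the disjoint one, and minimizing cardinality over the larger set can only yield a value at most the disjoint optimum. That is, the optimal MCHT cardinality is at most the optimal MCDT cardinality.

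For the reverse inequality I would invoke the local replacement operation introduced just before the theorem. Starting from any optimal hierarchical solution $(\Fam,\tau)$, I repeatedly pick a strict containment $S\subset S'$ with no group $S''\in\Fam$ lying strictly between them and apply the operation: set $\tau(S)\leftarrow\tau(S)+\tau(S')$ and replace $S'$ by $S'\setminus S$, merging into an identical existing group by adding translation vectors, or deleting it if empty. I would then verify four properties. \emph{Validity:} checking $\sum_{T\in\Fam_i}\tau(T)=\delta_i$ index by index, an index $i\in S$ gains $\tau(S')$ through $S$ but loses it since $S'$ no longer contains $i$, an index in $S'\setminus S$ keeps $\tau(S')$ through the shrunk group, and an index outside $S'$ is untouched, so all constraints are preserved. \emph{Cardinality:} $S$ survives while $S'$ is shrunk, merged, or deleted, so $|\Fam|$ never grows. \emph{Hierarchy:} the operation removes the containment between $S$ and $S'$ and, as argued before the theorem, creates no new containments, so the family stays hierarchical. \emph{Termination:} the potential $\sum_{T\in\Fam}|T|$ drops by at least $|S|\ge 1$ at each step, so after finitely many steps no strict containment remains, and a hierarchical family without strict containments is disjoint. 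Applying this to an optimal hierarchical solution produces a disjoint solution of cardinality at most the MCHT optimum, giving MCDT optimum $\le$ MCHT optimum; combined with the first direction, the two optima coincide.

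Finally I would apply Theorem~\ref{thm:disjoint} to compute an MCDT of $\Delta$ in $O(dn\log n)$ time; being disjoint it is also hierarchical, and by the equality of optima it realizes the minimum hierarchical cardinality, hence it is an MCHT. The main obstacle is the bookkeeping in the reverse direction: confirming that the replacement operation behaves correctly when $S'\setminus S$ already occurs in $\Fam$ (so the groups must be merged by summing their translation vectors) or becomes empty (so the group is removed), and that in each such case the chosen potential strictly decreases so that termination is guaranteed. The validity check and the cardinality bound are then routine.
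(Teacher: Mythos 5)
Your proposal is correct and takes essentially the same route as the paper: the paper's proof applies exactly this local replacement operation to turn an optimal hierarchical solution into a disjoint one of no greater cardinality, notes that disjoint families are hierarchical, and then invokes Theorem~\ref{thm:disjoint}. Your extra bookkeeping (the per-index validity check and the potential $\sum_{T \in \Fam} |T|$ for termination) only makes explicit what the paper leaves implicit.
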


We will now consider the Free variant (MCFT) of this problem. Although we are unable to establish the algorithmic complexity of this problem, we will show that a restricted version of the problem, the \emph{monotone} MCFT, is NP-hard. In the monotone version of the problem, all components of vectors in $\Delta$ must be non-negative, and the same requirement holds for all vectors in $\tau$ in a solution $(\Fam, \tau)$.
In the following we first allow that the dimension of $\Delta$ is specified as part of the input. Then we show NP-hardness via a reduction from Vertex Cover~\cite{Karp1972}. In the Vertex Cover problem we are given a graph $G = (V, E)$ and a number $k$, and the goal is to determine if there exists a subset $C \subseteq V$ with $|C| \leq k$ such that, for every edge $(u, v) \in E$, either $u \in C$ or $v \in C$. Our reduction is very similar to the NP-hardness proof of the Normal Set Basis problem by Jiang and Ravikumar~\cite{minimalNFAproblems}, but requires some extra work.   

\begin{restatable}{theorem}{MFCThard}\label{thm:MFCTNPhard}
The monotone MCFT problem in $\Reals^d$ is NP-hard if $d$ is given as part of the input.
\end{restatable}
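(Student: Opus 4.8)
The plan is to reduce from Vertex Cover, following the overall strategy of Jiang and Ravikumar's hardness proof for Normal Set Basis~\cite{minimalNFAproblems}, while handling the additional freedom introduced by real-valued, non-negative translation vectors. Given a graph $G=(V,E)$ and an integer $k$, I would construct in polynomial time a collection $\Delta$ of vectors in $\Reals^d$, with $d$ polynomial in $|V|+|E|$, together with a target bound $K$ on $|\Fam|$, so that $G$ has a vertex cover of size at most $k$ if and only if the monotone MCFT of $\Delta$ uses at most $K$ groups. Allowing $d$ to grow with the input is exactly what lets the construction devote separate coordinates to individual vertices and edges.

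\textbf{Construction.} I would give each vertex $v \in V$ and each edge $e \in E$ its own coordinate. The core of the instance is one point per edge whose support encodes its two endpoints, exploiting the following basic fact: since every $\tau(S)$ is non-negative and $\sum_{S \in \Fam_i}\tau(S)=\delta_i$, any group $S$ satisfies $\tau(S) \le \delta_i$ coordinatewise for all $i \in S$, so two points can share a group translation only on their common support. On top of these edge-points I would add forcing gadgets, namely auxiliary points together with coordinate values chosen so that the only way to keep all coordinatewise constraints tight is to realise each edge through a group that is ``owned'' by one of its endpoints. A group owned by a vertex $v$ can then be shared by all edges incident to $v$ at the cost of a single element of $\Fam$, whereas realising an edge without such a vertex-owned group forces extra, non-reusable groups. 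With $K$ calibrated to the number of gadget-forced groups plus $k$, using few vertex-owned groups corresponds precisely to choosing a small vertex cover.

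\textbf{Forward direction.} Given a vertex cover $C$ of size at most $k$, I would build an explicit solution: introduce one group per vertex of $C$, assign each edge to a covering endpoint so that every coordinate sum equals its target exactly (which monotonicity permits), add the fixed collection of gadget-forced groups, and verify the total is at most $K$.

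\textbf{Reverse direction and main obstacle.} Given an optimal monotone solution $(\Fam,\tau)$ with $|\Fam| \le K$, I must extract a vertex cover of size at most $k$. The difficulty, and the point where work is needed beyond the purely set-theoretic Normal Set Basis argument, is that $\tau$ may use arbitrary non-negative vectors and overlapping groups, so a clever fractional or overlapping assignment might in principle beat every integral, combinatorial solution. The key step is therefore a normalisation lemma: I would show that any optimal monotone solution can be transformed, without increasing $|\Fam|$, into a canonical form whose group translations are the combinatorially simple vectors anticipated by the construction (each relevant group supported on a single vertex coordinate with a fixed value, each edge assigned to exactly one covering group). The gadget coordinate values are chosen precisely so that the tight, non-negative constraints leave no room to split a required unit across several fractional contributions without paying in additional groups. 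Once this canonical form is in place, the groups touching vertex coordinates identify a set of vertices that must meet every edge, and counting them yields a vertex cover of size at most $k$. I expect this normalisation lemma to be the main obstacle; the surrounding reduction and the forward direction are routine by comparison.
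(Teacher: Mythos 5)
Your proposal correctly identifies the reduction source (Vertex Cover), the connection to Jiang and Ravikumar's Normal Set Basis proof, and the central role of monotonicity (that $\tau(S) \le \delta_i$ coordinatewise for every $i \in S$, so groups can only be shared on common support). However, the two components that actually carry the proof are missing. First, the ``forcing gadget'' is never specified. The paper's gadget is quite particular: each vertex $v$ gets \emph{two} private coordinates $x_1(v), x_2(v)$ and the point $\delta(v) = x_1(v)+x_2(v)$; each edge $e=(u,v)$ gets \emph{four} private coordinates and \emph{five} points $\delta_1(e),\hdots,\delta_5(e)$ arranged cyclically so that $\delta_1(e)$ and $\delta_3(e)$ touch $u$'s two coordinates, $\delta_2(e)$ and $\delta_4(e)$ touch $v$'s, and consecutive $\delta_j(e)$ overlap on exactly one edge coordinate. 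Your sketch with one coordinate per vertex, one per edge, and one point per edge does not obviously admit a budget $K$ with the required iff; the counting in the paper hinges on showing each edge forces \emph{exactly} $4$ groups and each vertex at least $1$, with the threshold $n + 4|E| + k$, and that arithmetic depends on the gadget's structure. Asserting that suitable gadgets exist is precisely the content of the theorem.

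Second, your reverse direction is framed as a ``normalisation lemma'' transforming an arbitrary optimal monotone solution into a canonical combinatorial one, which you acknowledge you have not proved. The paper does not do this and does not need to: instead it \emph{assigns} each group $S \in \Fam$ to a vertex or edge according to which $\delta$'s dominate $\tau(S)$ coordinatewise, then proves lower bounds by case analysis -- if only $3$ groups served an edge, some group would have to contribute to two of $\delta_1(e),\hdots,\delta_4(e)$, and monotonicity pins its support down to a single edge coordinate (or to zero), yielding a contradiction with constructing the remaining $\delta_j(e)$. It further shows that if neither endpoint of $e$ received $2$ groups, the edge needs $5$ groups, one of which can be reassigned. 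This direct counting sidesteps the fractional/overlapping-solution worry you raise without ever putting the solution in canonical form. As it stands, your proposal is a plan whose hardest steps (gadget design and the reverse-direction counting) are deferred, so it does not constitute a proof.
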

\begin{proof}
We use a reduction from Vertex Cover to the decision version of the monotone MCFT problem. Let $G = (V, E)$ be the input graph of a Vertex Cover instance. We will now construct an instance $\Delta$ for the monotone MCFT problem as follows. For every vertex $v \in V$ we consider 2 vectors $x_1(v)$ and $x_2(v)$. These vectors must be distinct (also for different vertices) and each have one component set to $1$ and all other components set to $0$ (like a basis vector). Similarly, for every edge $e \in E$ we consider 4 vectors $x_1(e), \ldots, x_4(e)$, again with one component set to $1$ and all other components set to $0$. As every vector must have a unique component set to $1$, all vectors have dimension $d = 2 |V| + 4 |E|$. Now, for every vertex $v \in V$ we add $\delta(v) = x_1(v) + x_2(v)$ to $\Delta$. Furthermore, for every edge $e = (u, v) \in E$ we add the following vectors to $\Delta$:
\begin{equation*}
    \begin{split}
        \delta_1(e) &= x_1(u) + x_1(e) + x_2(e), \\
        \delta_2(e) &= x_2(v) + x_2(e) + x_3(e), \\
        \delta_3(e) &= x_2(u) + x_3(e) + x_4(e), \\
        \delta_4(e) &= x_1(v) + x_4(e) + x_1(e), \\
        \delta_5(e) &= x_1(e) + x_2(e) + x_3(e) + x_4(e).
    \end{split}
\end{equation*}
We now show that $G$ has a vertex cover of size at most $k$ if and only if $\Delta$ has a monotone MCFT $(\Fam, \tau)$ with $|\Fam| \leq n + 4 |E| + k$.

First assume that $G$ has a vertex cover $C$ of size $k$. We will now first construct (the image of) $\tau$ before we assign the actual groups to the translations in $\tau$. For every $v \in C$ we add both $x_1(v)$ and $x_2(v)$ in $\tau$. For every $v \in V \setminus C$ we add $x_1(v) + x_2(v)$ to $\tau$. For every edge $e = (u, v) \in E$, assume without loss of generality, that $u \in C$. Then we add the vectors $\delta_2(e)$, $\delta_4(e)$, $x_1(e) + x_2(e)$, and $x_3(e) + x_4(e)$ to $\tau$. Observe that we have added exactly $n + 4 |E| + k$ translations to $\tau$. Furthermore, we can construct every vector $\delta(v) \in \Delta$ and $\delta_1(e), \hdots, \delta_5(e) \in \Delta$ by summing a subset of translations in $\tau$. For $\delta(v)$ this is trivial, as $\tau$ either contains $\delta(v)$ or both $x_1(v)$ and $x_2(v)$. For $\delta_2(e)$ and $\delta_4(e)$ this is also trivial, since both are in $\tau$. Furthermore, $\delta_1(e) = x_1(u) + (x_1(e) + x_2(e))$, and $x_1(u) \in \tau$ since $u \in C$ by assumption. Similarly, $\delta_3(e) = x_2(u) + (x_3(e) + x_4(e))$. Finally, we also have that $\delta_5(e) = (x_1(e) + x_2(e)) + (x_3(e) + x_4(e))$. It is now straightforward to fill the groups in $\Fam$ with the right elements such that $(\Fam, \tau)$ is a valid solution for $\Delta$ with $|\Fam| = n + 4 |E| + k$.

Now assume that there exists a monotone MCFT $(\Fam, \tau)$ of $\Delta$ with $|\Fam| \leq n + 4 |E| + k$. We will assign the groups in $\Fam$ to vertices or edges of $G$. We can assign a group $S \in \Fam$ to a vertex $v \in V$ if all components of $\tau(S)$ are at most that of $\delta(v)$. Note that any other group cannot contribute to $\delta(v)$, since we require the transformation to be monotone. Similarly, we can assign a group $S \in \Fam$ to an edge $e \in E$ if all components of $\tau(S)$ are at most that of $\delta_1(e)$, $\delta_2(e)$, $\delta_3(e)$, $\delta_4(e)$, or $\delta_5(e)$, and it is not already assigned to a vertex. At least one group in $\Fam$ must be assigned to each vertex $v \in V$, for otherwise we cannot construct $\delta(v)$. Furthermore, at least $4$ groups must be assigned to each edge $e \in E$. For the sake of contradiction, assume that only $3$ groups are assigned to an edge $e$. Then there must be a group $S$ involved in constructing two vectors out of $\delta_1(e), \delta_2(e), \delta_3(e)$, and $\delta_4(e)$. If these two vectors are $\delta_1(e)$ and $\delta_3(e)$ (or $\delta_2(e)$ and $\delta_4(e)$), then $\tau(S)$ is the $0$-vector due to monotonicity, which is a contradiction. Otherwise assume that these vectors are $\delta_1(e)$ and $\delta_2(e)$ (other cases are symmetric). Then $\tau(S)$ may only have a non-zero component where $x_2(e)$ has a $1$. As a result, we already need two other groups to form the vectors $\delta_1(e)$ and $\delta_2(e)$, and neither group translation can have a non-zero component where $x_4(e)$ has a $1$. Thus, we cannot construct $\delta_3(e)$ (or $\delta_4(e)$), which is a contradiction.

Finally, assume that for edge $e=(u,v)\in E$ neither $u$ nor $v$ has been assigned more than one group. This means that the groups assigned to $u$ and $v$ cannot be used to construct the vectors $\delta_1(e), \hdots, \delta_5(e)$. Since these vectors are independent, we need at least $5$ groups assigned to $e$ in this case. If an edge has at least $5$ groups assigned to it, then we can always reassign one of the groups assigned to $e$ to either $u$ or $v$ instead. Thus, we obtain that every edge has exactly $4$ groups assigned to it, every vertex has at least $1$ group assigned to it, and for every edge there is at least one incident vertex to which we have assigned at least $2$ groups. To construct the vertex cover $C$, we simply take all vertices in $V$ to which we have assigned at least $2$ groups in $\Fam$. By construction $C$ satisfies the constraints of a vertex cover, and it contains at most $|\Fam| - 4 |E| - n = k$ elements, as required. 

The construction can trivially be computed in polynomial time, and hence it is NP-hard to compute the monotone MCFT of $\Delta$, if the dimension $d$ of $\Delta$ is part of the input.
\end{proof}

Theorem~\ref{thm:MFCTNPhard} does not apply to cases where the number of dimensions $d$ of $\Delta$ is small. However, we show that the (monotone) $d$-dimensional version of the problem, even when $d$ is polynomial in $n$, can be reduced to the $1$-dimensional version of the problem under some mild assumptions.

\begin{restatable}{theorem}{MFCTdim}\label{thm:MFCTdimreduce}
There exist a polynomial time reduction of the (monotone) MCFT problem in $d$ dimensions to the (monotone) MCFT problem in $1$ dimension, assuming that $d$ is polynomial in $n$ and the input $\Delta$ consists of bounded integer vectors in $\Reals^d$. 
\end{restatable}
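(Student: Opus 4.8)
The plan is to reduce a $d$-dimensional (monotone) MCFT instance $\Delta = \{\delta_1, \ldots, \delta_n\}$ to a $1$-dimensional one by encoding each vector as a single number in a sufficiently large base, so that the individual coordinates never interfere. Let $M = \max_{i,j} |(\delta_i)_j|$; since the vectors are bounded integers, $M$ is polynomially bounded. Because a minimum-cardinality solution never needs more than the $n$ singleton groups, I may assume $|\Fam| \le n$, and in the monotone case every group translation is bounded by $M$ in each coordinate. Picking a base $B > nM$ and defining the linear encoding $\phi(v) = \sum_{j=1}^d v_j B^{j-1}$, every coordinate-wise sum that occurs in a valid solution stays in $(-B, B)$, so $\phi$ is injective on all vectors that can appear and the base-$B$ digits of each $\phi(\delta_i)$ are exactly the coordinates of $\delta_i$. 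The $1$-dimensional instance is $\phi(\Delta) = \{\phi(\delta_1), \ldots, \phi(\delta_n)\}$; since $d$ is polynomial in $n$ and $B$ has polynomially many bits, all numbers have polynomial bit-length and the instance is computable in polynomial time.

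For the forward inequality, any solution $(\Fam, \tau)$ for $\Delta$ maps to $(\Fam, \phi \circ \tau)$ for $\phi(\Delta)$: validity is preserved because $\phi$ is linear, so $\sum_{S \in \Fam_i} \phi(\tau(S)) = \phi\bigl(\sum_{S \in \Fam_i} \tau(S)\bigr) = \phi(\delta_i)$, and $\phi$ maps non-negative vectors to non-negative numbers, so monotonicity is preserved as well. The cardinality is unchanged, hence the optimum of $\phi(\Delta)$ is at most that of $\Delta$.

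The substance of the proof is the reverse inequality: from an optimal solution $(\Fam, \tau')$ of $\phi(\Delta)$ I want to recover a $d$-dimensional solution with no more groups, by reading each translation $\tau'(S)$ as a vector of base-$B$ digits $w(S)$. This works precisely when the solution is \emph{carry-free}, i.e.\ when for every index $i$ the digit-wise sums $\sum_{S \in \Fam_i} w(S)$ reconstruct $\delta_i$ without carrying between blocks; in that case $(\Fam, w)$ is valid for $\Delta$ with $|\Fam|$ groups. The main obstacle is to show that an optimal $1$-dimensional solution may be assumed carry-free. The intuition is that $B > nM$ leaves no room to carry: since every target coordinate is at most $M$ and (in the monotone case) every contribution is non-negative, a carry into the next block would force some group to place $\ge B$ into a single coordinate, which cannot be cancelled and hence cannot help. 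The subtlety, and where the extra work lies, is that a non-negative $1$-dimensional translation can a priori decode to a vector with a large or negative digit (for instance $B - 5 = \phi(-5, 1, 0, \ldots)$ is non-negative yet decodes with a negative coordinate), so one must first normalize the solution into canonical digit form without increasing the number of groups. I would establish this by an exchange argument that repeatedly rewrites any translation whose digits fall outside $[0, M]$ using the freedom available in the sets $\Fam_i$, arguing that each such rewrite keeps the solution valid and monotone and creates no new groups, with the bound $B > nM$ guaranteeing termination in canonical form.

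Combining the two inequalities yields that the optima of $\Delta$ and $\phi(\Delta)$ coincide, and since both the encoding and the decoding run in polynomial time under the stated assumptions, this is a polynomial-time reduction. The non-monotone variant is handled by the same encoding with a digit range taken symmetric around $0$; there the decoded translations are forced to be integer combinations of the targets and therefore automatically lie in the image of $\phi$, so the normalization only needs to keep digits bounded rather than non-negative, making that direction no harder than the monotone one. I expect essentially all of the difficulty to concentrate in the carry-free normalization step of the monotone reverse direction.
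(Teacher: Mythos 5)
Your encoding and the forward direction coincide with the paper's approach (a positional encoding in a large base, with linearity giving one inequality for free), but the reverse direction---which you yourself flag as carrying all the difficulty---is missing its key argument, and the parameters you chose make it unrecoverable as stated. The gap is the tacit assumption that an optimal $1$-dimensional solution can be normalized so that every translation decodes into \emph{integer} digits in $[0,M]$. In an optimal solution $(\Fam,\tau')$ the validity constraints form a binary linear system; after reducing to a full-rank square subsystem (optimality forces full rank, since otherwise a group could be deleted), Cramer's rule gives translations that are \emph{rationals} with denominator $\det(B)$, which can be as large as $n!$, and with per-coordinate numerators up to about $n!\,(n+1)!\,M$. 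They need not be integers at all, so there is no canonical digit form with digits in $[0,M]$ to normalize into, and an exchange argument that ``rewrites digits'' has nothing to act on. This also falsifies your closing claim that the decoded translations are integer combinations of the targets: they are rational combinations. Consequently a base $B>nM$ is far too small---once denominators and the correspondingly inflated numerators appear, adjacent coordinate blocks interfere and decoding becomes ambiguous.

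The paper resolves exactly this point by taking the base $A=3(n+1)!^2M$, defining a number to be proper if it has the form $\frac{1}{D}\sum_{j}x^jA^{j-1}$ with $|D|\le n!$ and $|x^j|\le n!\,(n+1)!\,M$, proving via the Cramer's-rule argument that every translation in an optimal $1$-dimensional solution is proper, and then checking that sums of at most $n$ proper numbers still decode unambiguously (the decoded digits may be negative and fractional, but they stay within a window that the huge base keeps separated). To salvage your smaller base you would have to prove that some optimal solution uses only integer translations with digits in $[0,M]$, which does not follow from anything you wrote and is not obviously true. So the high-level route is the paper's, but the one step you defer is the entire content of the theorem, and the deferral is not repairable without changing your choice of base and adding the linear-algebra argument.
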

\begin{proof}
We will first focus on the non-monotone version of the problem.
Let $\Delta$ be the input for the MCFT problem in $d$ dimensions, and let $M$ be the maximum absolute value of integers in $\Delta$. We construct a linear function $f\colon \Reals^d \rightarrow \Reals$ that maps vectors in $\Delta$ to numbers in $\Delta'$. Let $\delta_i \in \Delta$ be denoted as $(\delta_i^1, \hdots, \delta_i^d)$, then $f$ is defined as follows:
\[
\delta'_i = f(\delta_i) = \sum_{j=1}^d \delta_i^j A^{j-1},
\]
where $A = 3(n+1)!^2 M$. We claim that the MCFT of $\Delta$ (in $d$ dimensions) has equal cost to the MCFT of $\Delta'$ (in $1$ dimension). 

Let $(\Fam, \tau)$ be an MCFT of $\Delta$. Then, by definition, we have that $\sum_{S \in \Fam_i} \tau(S) = \delta_i$. We claim that $(\Fam, f(\tau))$ is a valid solution for $\Delta'$, where $f(\tau)$ is defined as $f(\tau)(S) = f(\tau(S))$ for all $S \in \Fam$. By linearity of $f$ we get that:
\begin{align*}
\sum_{S \in \Fam_i} f(\tau)(S) &= \sum_{S \in \Fam_i} f(\tau(S))\\
&= f\left(\sum_{S \in \Fam_i} \tau(S)\right)\\
&= f(\delta_i) = \delta'_i.
\end{align*}
Thus, $(\Fam, F(\tau))$ is indeed a valid solution for $\Delta'$.

Now let $(\Fam, \tau')$ be an MCFT of $\Delta'$. Our aim is to construct an inverse mapping of $f$ to map numbers in $\tau'$ to $d$-dimensional vectors. To that end we define a number $x$ as \emph{proper} if it can be written as $x = \frac{1}{D} \sum_{j=1}^d x^j A^{j-1}$, where $D$ is an integer with $|D| \leq n!$ and all $x^j$ are integers with $|x^j| \leq n!(n+1)! M$ for all $1 \leq j \leq d$. If some integers $x^j$ are larger, but still satisfy $|x^j| \leq (n+1)!^2 M$, then we call $x$ \emph{well-behaved}.
If $x$ is well-behaved, then we can compute a corresponding integer vector $y$ in $d$ dimensions as follows. We first compute $x' = D x$, which is an integer. Then, for the first component, we first compute $y^1 = (x' \mod A)$. If $y^1 \leq (n+1)!^2 M$, then we replace $x'$ by $(x' - y^1)/A$ and repeat the process for the remaining components. Otherwise, if $y^1 \geq 2(n+1)!^2 M$, then we set $y^1$ to $y^1-A$ (it becomes negative) and repeat the process with $x'$ replaced by $(x'+y^1)/A$. Observe that the case $(n+1)!^2 M < y^1 < 2(n+1)!^2 M$ cannot occur if $x$ is well-behaved. Furthermore, we get that $y^j = x^j$ by construction. Finally, we divide every component of $y$ by $D$. 
We refer to the result $y$ of this process on a well-behaved number $x$ simply as $f^{-1}(x)$. It is easy to verify that, for all well-behaved numbers $x$, $f$ is the inverse of $f^{-1}$, that is, $f(f^{-1}(x)) = x$. That also directly implies that $f^{-1}$ is a linear function on well-behaved numbers. 

Now assume that for all $S \in \Fam$ we have that $\tau'(S)$ is proper. It follows directly from the definitions that any (partial) sum of at most $n$ proper numbers is well-behaved. We claim that $(\Fam, f^{-1}(\tau'))$ is a valid solution for $\Delta$, where $f^{-1}(\tau')(S) = f^{-1}(\tau'(S))$ for all $S \in \Fam$. Since $f^{-1}$ is linear for all well-behaved numbers, we can use the same argument as above to show that $\sum_{S \in \Fam_i} f^{-1}(\tau')(S) = \delta_i$ for all $i$. Here we use that $\tau'(S)$ is proper for all $S \in \Fam$ and that $|\Fam_i| \leq |\Fam| \leq n$.

Finally, we show that, if $(\Fam, \tau')$ is optimal, then indeed $\tau'(S)$ must be proper for all $S \in \Fam$. We can see the numbers $\tau'(S_i)$ for all $S_i \in \Fam$ as a column vector $\tau'$ with $k = |\Fam|$ dimensions. Similarly we can see $\Delta'$ as a column vector with $n$ dimensions. Since $(\Fam, \tau')$ is a valid solution for $\Delta'$, there must exist a binary $n \times k$ matrix $B$ such that $B \tau = \Delta'$. We can then remove rows from $B$ (and $\Delta'$) until all rows of $B$ are linearly independent. We refer to the reduced form of $\Delta'$ as $\Delta''$. Now assume that $B$ is not full rank. In that case we can add a row to $B$ that corresponds to an equation of the form $\tau'(S_i) = 0$. Since the resulting set of linear equations must have a solution, this implies that there exists a valid solution $(\Fam\setminus\{S_i\}, \tau'')$ for $\Delta'$, which is a contradiction. Therefore, $B$ must have full rank. In that case we obtain that $\tau' = B^{-1} \Delta''$. As the rank of $B$ is at most $n$, it follows from Cramer's rule that all elements in $B^{-1}$ are of the form $\frac{a}{\det(B)}$, where $\det(B)$ is the determinant of $B$, and $a$ is the determinant of a submatrix of $B$. Thus, $a$ and $\det(B)$ are both integers with $|a|, |\det(B)| \leq n!$. Recall that every $\delta'_i \in \Delta'$ is computed using $f$ and is hence of the form $\sum_{j=1}^d x^j A^{j-1}$, where $x^j$ is an integer with absolute value at most $M$. Thus, if we want to write some $\tau'(S)$ as $\frac{1}{D} \sum_{j=1}^d x^j A^{j-1}$, then we can choose $D = \det(B)$. Since $D B^{-1}$ is an integer matrix and all $\delta'_i$ are also integers, we obtain that $x^j$ is an integer for every $j$. Furthermore, $|x^j|$ is at most $n n! |D| M \leq n!(n+1)! M$. As a result, $\tau'(S)$ must be proper for all $S \in \Fam$.

Since $f$ preserves the monotonicity of a solution, the monotone MCFT problem in $d$ dimensions can also be reduced to the monotone MCFT problem in $1$ dimension. Furthermore, the numbers in $\tau'$ are integers with absolute value at most $A^d$. Hence, the bit complexity of these numbers is at most $\log(A^d) = d \log(3 (n+1)!^2 M) = O(d (n \log n + \log M))$. This implies that, even if $d$ depends polynomially on $n$, the reduction can be computed in polynomial time.  
\end{proof}

The conditions of Theorem~\ref{thm:MFCTdimreduce} also apply to the instances in the reduction in Theorem~\ref{thm:MFCTNPhard}, and hence we can conclude that the monotone MCFT problem is also NP-hard in $1$ dimension. However, the complexity of the non-monotone version of the problem remains an open problem. Intuitively, this version of the problem seems harder than the monotone version, but the reduction in Theorem~\ref{thm:MFCTNPhard} heavily uses the fact that the problem is monotone. Nonetheless, we believe that the same reduction should also work for the non-monotone version, but that the challenge lies in arguing the correctness of the reduction. We therefore conjecture that the (non-monotone) MCFT problem is also NP-hard.

\section{Conclusion and Future Work}\label{sec:conclusion}
We have introduced a new class of problems for measuring the visual complexity of point set mappings, where group translations are treated as single operations. We are able to establish the algorithmic complexity of most of the problems in our problem classification, except for two: the (non-monotone) MCFT problem and the (Euclidean) MLFT problem. We believe that the former problem is NP-hard, but that is not so clear for the MLFT problem, where we believe a polynomial-time algorithm may be possible. 
Beyond that, our problem classification can be expanded in several natural ways, leading to new open questions:
\begin{description}
    \item[Stages:] Disjoint transformations have the advantage that they can be animated simultaneously, but they are very restricted. We can make them more powerful by considering several disjoint transformations in sequence, as different \emph{stages}. A new optimization criterion may then be to minimize this number of stages in a transformation.
    \item[Unlabeled point sets:] In some cases, the provided point sets may be unlabeled. Additionally identifying a labeling that optimizes the transformation adds another layer of complexity to the described problems.
    \item[Transformations:] Instead of translations we may also consider other natural transformations like rotations and scaling. As there are multiple ways to scale or rotate one point to another, these new transformations increase the complexity of the problem significantly.
\end{description}

\bibliographystyle{splncs04}
\bibliography{Sofsem-arxiv}

\end{document}